\documentclass{statsoc}

\usepackage[a4paper]{geometry}          
\usepackage[textwidth=8em,textsize=small]{todonotes} 
\usepackage{amsmath}
\usepackage{amssymb}
\usepackage{graphicx,psfrag,epsf}
\usepackage{enumerate}
\usepackage{natbib}
\usepackage{url}                
\usepackage{xr}
\usepackage{times}
\usepackage[plain,noend]{algorithm2e}
\usepackage{enumitem}
\usepackage[utf8]{inputenc}
\usepackage{color,soul}
\allowdisplaybreaks         


\newcommand{\argmax}{\operatornamewithlimits{argmax}}

\newcommand{\Nn}[1]{N_{n}(#1)}
\newcommand{\Dn}[1]{D_{n}(#1)}

\newcommand{\gn}{\hat{g}_{n}}
\newcommand{\g}{g_0}
\newcommand{\G}{\mathbb{G}}

\newcommand{\intinfinf}{\int_{-\infty}^{\infty}}

\newcommand{\intainf}[1]{\int_{a}^{\infty}}

\newcommand{\SUM}[3]{\sum_{#1=#2}^{#3}}
\newcommand{\nSUM}[4]{\sum^{#3}_{\substack{#1=#2 \\ (\neq #4)}}}
\numberwithin{equation}{section}
\newtheorem{theorem}{Theorem}[section]

\newtheorem{lemma}[theorem]{Lemma}



\title[Curve Registration]{Feature Sensitive Curve Registration by Kernel Matching}

\author{Dibyendu Bhaumik}
\address{Department of Statistics and Information Management, Reserve Bank of India,
        Mumbai,
        India.}
\email{dbhaumik@rbi.org.in}
\author{Radhendushka Srivastava}
\address{Department of Mathematics, Indian Institute of Technology,
        Mumbai,
        India.}
\email{rsrivastava22@gmail.com}
\author[Bhaumik {\it et al.}]{Debasis Sengupta\thanks{
    Address for correspondence: Debasis Sengupta, Applied Statistics Unit, Indian Statistical Institute, 203, Barrackpore Trunk Road, Kolkata 700108, West Bengal, India, E-mail: sdebasis@isical.ac.in}}
\address{Applied Statistics Unit, Indian Statistical Institute,
        Kolkata,
        India.}
\email{sdebasis@isical.ac.in}

\begin{document}
    \begin{abstract}
            In this paper, we argue that the problem of registering two sets of functional data, where the underlying mean function has sharp features, is not properly addressed by methods designed to align a bunch of growth curves data. We provide a new method, which is able to pool local information without smoothing and to match sharp landmarks without manual identification. This method, which we refer to as kernel-matched registration, is based on maximizing a kernel-based measure of alignment. We prove that the proposed method is consistent under fairly general conditions. Simulation results show superiority of the performance of the proposed method over two existing methods. The proposed method is illustrated through the analysis of three sets of paleoclimatic data.
    \end{abstract}

    \keywords{Measure of alignment, Warping function, Consistency, Curve alignment, Functional data, Ice core data}

    \section{Introduction}\label{sec:int}

        Consider functional data arising from observations recorded at a sequence of time points. The task of aligning multiple but similar sets of functional data by possibly nonlinear adjustment to their time scales is often referred to as `registration'. Multi-dimensional versions of the problem of registration are important in image and video processing, where multiple dimensions are involved. In the one dimensional case, the focus of research has been in the area of growth curves. For longitudinal growth data, often viewed as a common pattern expressed differently through different individuals with their diverse scales of evolution, the need for registration arises from the quest for the common pattern. This is in contrast with the field of image-processing, where registration is used mostly for comparing images and modeling changes in them.

        In growth data applications, $k$ sets of functional data are often postulated as arising from the signal-plus-noise model
        \begin{equation}
            y_i(t) = \mu_i(g_i(t))+\epsilon_i(t),\quad i=1,\ldots,k,\label{eq:general_model1}
        \end{equation}
        where $g_1,\ldots,g_k$ are strictly increasing time-warping functions for different individuals, \linebreak$\mu_1,\ldots,\mu_k$ are smooth functions having common features, such as extrema or points of inflection, at identical points (across all individuals), and $\epsilon_i$'s are additive errors. The functions $\mu_1,\ldots,\mu_k$ are assumed to be variations of a common underlying function $\mu$. Once the warping functions are estimated, they are used to bring the data to a common time-scale, so that the `central' function $\mu$ can be estimated from the pooled data. Functional data on many individuals is expected to produce a good estimator of~$\mu$.

        A special case of~\eqref{eq:general_model1} with linear $g_i$'s and $\mu_i$'s that are linear variations of $\mu$, called a shape invariant model (SIM), had been studied by several researchers (see \citet{Lawton_Sylvestre_Maggio_1972}, \citet{Stutzle_et_al_1980}, \citet{Kneip_Gasser_1988}, \citet{Kneip_Engel_1995}). \citet{Brumback_Lindstrom_2004} worked on a more general model with  $g_1$,\ldots, $g_k$ and $\mu$ as splines, while \citet{Gervini_Gasser_2004} took $g_1^{-1}$,\ldots, $g_k^{-1}$ as splines.

        Borrowing the idea of {\it dynamic time warping} from the engineering literature, \citet{Wang_Gasser_1997} proposed pre-smoothing of the data to obtain individual-specific functions and their registration with respect to an arbitrary reference curve by minimization of a complex cost function that penalizes departure not only in the normed function but also in its normed derivative. This cost function gives special attention to growth curve modeling. Eventually the average of the estimated warping functions is chosen as the central time scale, and all the curves are re-registered with respect to it. \citet{Wang_Gasser_1999} modified this method by replacing the fixed reference curve with a dynamically updated `central' curve at every stage of iteration.

        \citet{Kneip_Gasser_1992} had considered general time-warping functions and formalized the intuitive approach of matching common features called `landmarks' for registration. Once the landmarks are matched, the rest of the warping function is obtained through interpolation. Other methods of this category, known as {\it marker registration}, are described in~\citet{Bookstein_1991}. \citet{Bigot_2006} proposed a method of automatically identifying landmarks and subsequent registration by minimizing a cost function that rewards proximity of landmarks but also rewards smoothness of the warping function.

        \citet{Ramsay_Li_1998} considered the following special case of~\eqref{eq:general_model1}
        \begin{equation}
            y_i(t) = \mu(g_i(t))+\epsilon_i(t),\quad i=1,\ldots,k.\label{eq:other_model1}
        \end{equation}
        They used a cost function with a different penalty that focuses on the curvature of the candidate warping function, constrained to be monotonically non-decreasing, and called this method continuous monotone registration. This method also starts with pre-smoothed data sets. At every stage of the iterative optimization process, the estimated $\mu$ is regarded as the sample average of the currently aligned versions of these smoothers.

        Many other methods have been proposed for the estimation of $g_i$ under the model~\eqref{eq:other_model1}. \citet{Kneip_et_al_2000} estimated the $g_i$'s iteratively by local non-linear regression. \citet{Gervini_Gasser_2005} assumed that $g_i$'s are parametric variations of a common positive-valued and increasing function, modeled through splines, and proposed nonparametric maximum likelihood estimation of this common function and~$\mu$. \citet{Liu_Muller_2004} suggested simultaneous registration of all the individual time-scales with respect to a reference scale, for which they recommended the cumulative proportion of the total area under a pre-smoothed version of $|y_i|$ till a particular time. \citet{James_2007} proposed a method based on matching of `functional moments', intended to capture landmarks or local features. \citet{Tang_Muller_2008} sought to synchronize smoothers through individual data sets in a pair-wise manner by minimizing an integrated squared distance cost function with penalty on departure of the warping function from the identity map. Collation of these warping functions leads to a common time scale for final registration. \citet{Kneip_Ramsay_2008} proposed registration under the principle that proper alignment of smoothers through individual data sets would be better approximated by functional principle component analysis.

        While many of the foregoing methods have applicability beyond growth curves, the statistical literature on registration has evolved with the growth curve example lying firmly at the centre of attention. Growth curve examples have been given even in papers that developed asymptotic results with number of longitudinal data points (rather than the number of replicates) going to infinity \citep{Kneip_Engel_1995, Wang_Gasser_1999, Gervini_Gasser_2004}. This singular focus may not have served other applications well, as we shall see.

        In the case of growth curves, the number of individuals is generally much more than the number of observations per individual. Another aspect of this problem is that the underlying function is known to be smooth. However, the need for registration may also arise in situations where the underlying function is not very smooth and the number of observations per data set is much more than the number of data sets to be time-aligned or registered. There may even be only two data sets for alignment.  Consider, for example, the atmospheric concentration of carbon dioxide for the past 415,000 years constructed from two ice cores extracted from two different locations in Antarctica namely Lake Vostok and EPICA Dome C \citep{Petit_et_al_1999, Luthi_et_al_2008}, plotted in Figure~\ref{fig:co2_epica_vostok}. The sharp ups and downs observed in the two time series at nearly identical points of time indicate that a common function having these features underlies the two sets of measurements. One cannot expect many replicates of ice core data to come about -- not only for economic reasons, but also because of provisions of international treaties that prohibit intrusion in an ecologically sensitive area \citep{Watts_1992}. Therefore, it is all the more important that information on the two sets of measurements are collated. However, the data are misaligned, possibly because of estimation error in the recorded `time'.
        \begin{figure}[h!]
            \begin{center}
                \includegraphics{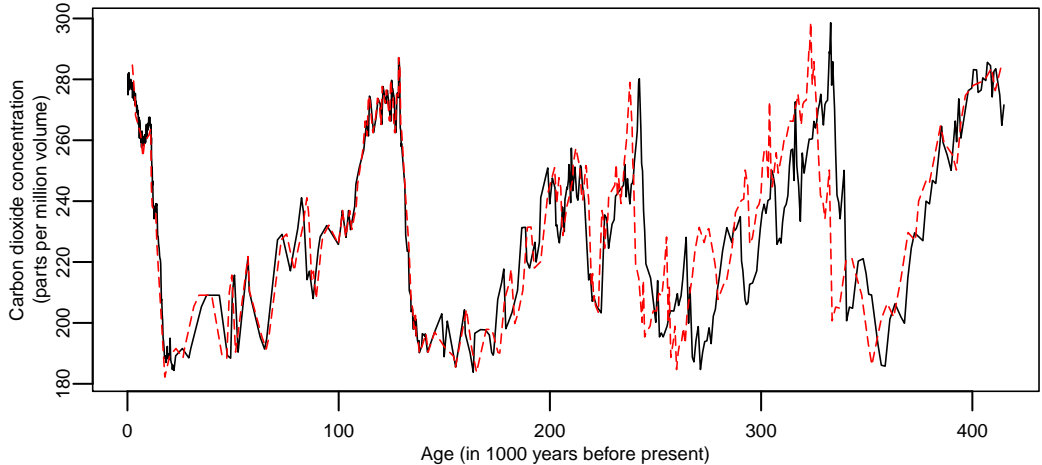}
                \caption{\label{fig:co2_epica_vostok}Atmospheric concentration of carbon dioxide derived from ice-cores at EPICA Dome C (solid line) and Lake Vostok (dashes) of Antarctica}
            \end{center}
        \end{figure}

        The sharp and numerous peaks and valleys in the above data series, often identified as `landmarks', have no parallel in longitudinal growth data. These features present an opportunity, as it may be possible to use them for the purpose of registration.

        A simple model for registering one data set with respect to another is
        \begin{equation}
            \begin{array}{rcl}
                y_1(t)&=&m(t) + \epsilon_1(t),\\
                y_2(t)&=&m(g_0(t)) + \epsilon_2(t),
            \end{array}\label{eq:the_model1}
        \end{equation}
        where $m$ is the underlying mean function, expressed in the time-scale of the first data set, and $g_0$ is the strictly increasing function that warps the time-scale of the second data set into that of the first.

        The model \eqref{eq:the_model1} may appear to be a simplified form of \eqref{eq:other_model1} for $k=2$, under the {\it asymmetric} constraint $g_1(t)=t$. In contrast, the more general models are often used for estimating the $g_i$'s subject to a {\it symmetric} constraint for identifiability. When there are only two sets of data ($k=2$), however, a neutral time-scale induced by a symmetric constraint appears unnecessary. In fact, \citet{Bhaumik_Sengupta_2017} have shown that for $k=2$, the model~\eqref{eq:other_model1} under a symmetric constraint may be even \textit{more} restrictive than the model \eqref{eq:the_model1}. Thus, it is necessary to pay special attention to the model \eqref{eq:the_model1} with possibly non-smooth $m$, which has so far gone under the radar of the models and methods designed to register growth curves.

        The key method that we present in this paper, which we call kernel-matched registration (KMR), combines the ideas of smoothing and matching of landmarks. Rather than aligning {\it smoothers through data sets} by minimizing the distance between them (as practised in dynamic time warping), we directly align the {\it data sets} by maximizing a measure of their alignment. This measure rewards matching of sharp features, which is the strong point of marker registration, and also pools local information, which is the strong point of methods based on smoothing.

        In \S~\ref{sec:mod_met}, we propose a new estimator of the time transformation function $g$, by maximizing a measure of alignment of two functional data sets. The maximization is done over an appropriate class of transformations. This measure of alignment is designed to capture the information contained in the entire set of data, including locations of sharp variation. We show that it possesses some desired characteristics. The method is automated as it does not require manual identification of landmarks. Identifiability of $g$ with respect to the chosen model, under appropriate conditions, is proved in \S~\ref{sec:con}. In \S~\ref{sec:con}, we have established the consistency of our estimator as the numbers of observations in the two data sets go to infinity. This result holds when the time transformation is chosen from a class of functions satisfying some general conditions. We have provided a method of estimating standard error of the proposed estimator in \S~\ref{sec:se_g_est}. We have reported in \S~\ref{sec:sim_per} the results of a simulation study to demonstrate the performance of the proposed estimator (chosen from a particular class), and the associated standard error. The method is illustrated in \S~\ref{sec:dat_ana} through the analysis of several real data sets. Some concluding remarks are provided in \S~\ref{sec:dis}.

    \section{Model and Methodology}\label{sec:mod_met}

        Let $\{(t_1, y_{11}), \ldots, (t_{n_1}, y_{1n_1})\}$ and $\{(s_1,y_{21}),\ldots,(s_{n_2},y_{2n_2})\}$ be two sets of functional data, arising from the model
        \begin{eqnarray}\label{eq:model}
            y_{1i} &=& m(t_i)+\epsilon_{1i}\quad(i=1,\ldots, {n_1})\notag\\
            y_{2j}&=&m(g_0(s_j))+\epsilon_{2j},\quad (j=1,\ldots, {n_2})
        \end{eqnarray}
        where $m$ is an underlying location function that is continuous, and $g_0$ is an unknown time transformation function, which is continuous and strictly increasing. The terms $\epsilon_1$ and $\epsilon_2$ represent additive random measurement errors, which have mean zero.

        For any given continuous and strictly increasing transformation function $g$, let us define the functional
        \begin{equation}\label{eq:Ln}
            L_n(g)=
            \frac
            {
            \displaystyle\frac{1}{n_1 n_2}
            \displaystyle\sum\limits_{i=1}^{n_1}
            \displaystyle\sum\limits_{j=1}^{n_2}
            \frac{1}{h_t}K_1\left(\frac{{t_i} - g({s_j})}{h_t}\right)
            \frac{1}{h_y}K_2\left(\frac{{y_{1i}} - y_{2j}}{h_y}\right)
            }
            {
            \displaystyle\frac{1}{n_1 n_2}\displaystyle\sum\limits_{i=1}^{n_1}
            \displaystyle\sum\limits_{j=1}^{n_2}
            \frac{1}{h_t}K_1\left(\frac{{t_i} - g(s_j)}{h_t}\right)},
        \end{equation}
        where $n=n_1+n_2$, $K_1$ and $K_2$ are kernel functions that are probability densities, and $h_t$ and  $h_y$ are the corresponding bandwidths. The above functional can be interpreted as a weighted sum of the terms $(1/h_y)K_2\{(y_{1i} - y_{2j})/{h_y}\}\ (i=1,\ldots, n_1;\ j=1,\ldots, n_2)$, which can be seen as matching scores between $y_{1i}$'s and $y_{2j}$'s. The weights depend on~$g$. Note that when $g=g_0$, for every pair of $i$ and $j$ such that $g(s_j)$ is close to $t_i$, the continuity of $m$ ensures that $y_{1i} - y_{2j}$ is expected to be small. Therefore, for $g=g_0$, large values of $(1/h_y)K_2\{(y_{1i} - y_{2j})/{h_y}\}$ are expected to occur together with large values of their weights. This may not be the case when $g\neq g_0$. In \S~\ref{sec:con}, we show that under some general conditions, the probability limit of $L_n(g)$ attains its maximum value if and only if $g=g_0$.

        Thus, $L_n(g)$ may be interpreted as a measure of alignment or a kernel-matched overall score. This interpretation is clearer from the following equivalent representation of the measure, which can be easily verified:
        \begin{equation}
            L_n(g)=\lim_{\delta\downarrow0}\frac1{\delta}pr\left((y_1(t)-y_2(s))\in [h_yZ_2,h_yZ_2+\delta]\Bigm|t-g(s)=h_tZ_1\right),
            \label{Ln_interpret}
        \end{equation}
        where $(t,y_1(t))$ and $(s,y_2(s))$ are random samples from the empirical distributions of \linebreak $(t_1,y_{11}),\ldots(t_{n_1},y_{1n_1})$ and $(s_1,y_{21}),\ldots(s_{n_2},y_{2n_2})$, respectively, and $Z_1$ and $Z_2$ are random samples from the densities $K_1$ and $K_2$, respectively.

        Let us now examine the roles of the bandwidth parameters $h_t$ and $h_y$ in the above measure. A small value of $h_t$ makes the weight for a given $i$ and $j$ nearly equal to zero, unless $t_i$ is very close to $g(s_j)$. Thus, only a few weights can be substantial. When $h_t$ is large, weights can be substantial for more combinations of $i$ and $j$. Thus, $h_t$ controls the effective number of weights in the weighted sum in \eqref{eq:Ln}. On the other hand, $h_y$ controls the penalty for discrepancies between $y_{1i}$ and $y_{2j}$. A very large value of $h_y$ might make $L_n(g)$ insensitive to changes in $g$, as there would not be enough penalty for mismatch between $y_{1i}$ and $y_{2j}$. A very small value of $h_y$ would make $L_n(g)$ unstable, as $(1/h_y)K_2\{(y_{1i} - y_{2j})/h_y\}$ would be nearly zero for most of the combinations of $i$ and $j$.

        We define the proposed estimator of the function $g_0$ as
        \begin{equation}\label{eq:gn_hat}
            \hat{g}_n=\arg\max_{g\in\G_0} L_n(g),
        \end{equation}
        where $L_n(g)$ is as defined in \eqref{eq:Ln} and $\G_0$ is a suitable class of continuous and strictly increasing functions that includes the true transformation function~$g_0$.

        As for choices of the bandwidths, one can select $h_t$ as a fraction of the average horizontal separation between successive observations in the lesser dense data set and $h_y$ as a fraction of the combined range of the observed variable. Some guidelines are given in \S~\ref{sec:sim_per}.

        The objective function $L_n(g)$ automatically rewards candidate transformation functions that map peaks of one data set into the corresponding peaks of the other. On the other hand, if a peak is missing from one of the data sets, then it does not penalize the `correct' transformation any more than a similar alternative candidate (i.e., a marginally different transformation). Therefore, the estimator $\hat{g}_n$ should be able to utilize these `landmarks' automatically for registration.

    \section{Consistency}\label{sec:con}

        Let the errors $\epsilon_{1i}$ and $\epsilon_{2j}$ and the time points $t_i$ and $s_j\ (i=1,\ldots, n_1;\ j=1,\ldots, n_2)$ in the model~\eqref{eq:model} be mutually independent sets of samples from the probability density functions $f_{\epsilon_1} $, $f_{\epsilon_2} $, $f_1$ and $f_2$ having supports over $(-\infty,\infty)$, $(-\infty,\infty)$, $[a,b]$ (for $a<b$), and $[c,d]$ (for $c<d$) respectively. We can extend the domains of $f_1$ and $f_2$ to the entire real line and define them to be zero outside their respective supports.

        We define $\G$ as the class of all functions $g$ defined over $[c, d]$, differentiable almost everywhere with derivative bounded from below by a positive number, such that the set $S_g=g^{-1}([a, b]) \cap [c, d]$ contains a non-empty open interval and that $g$ agrees with $g_0$ at least at one point in $S_{g_0} \cap S_g$. We first establish the identifiability of $g_0$ (within $\G$) with respect to the model~\eqref{eq:model}.

        \begin{theorem}\label{thm:identifiability}
            Let the class of functions $\G$ be as described above, and assume the following about the functions $m$ and $g_0$ in the model~\eqref{eq:model}.
            \begin{description}
                \item [A1] The function $m$ is continuous. Further, for any given interval $[p,q]$ and for every $y\in\left[\min_{t\in[p,q]}m(t),\max_{t\in[p,q]}m(t)\right]$, the set $\left\{t:\ m(t)=y,\ t\in[p,q]\right\}$ has a finite number of elements.
                \item [A2] The function $g_0$ is a member of the class $\G$.
            \end{description}
            If $g\in\G$ be such that $m(g(s))=m(g_0(s))$ for all $s\in S_g\cap S_{g_0}$, then $S_g=S_{g_0}$ and $g(s)=g_0(s)$ for all $s\in S_{g_0}$.
        \end{theorem}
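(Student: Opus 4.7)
My plan is to reduce the theorem to showing $\phi(s):=g(s)-g_0(s)\equiv 0$ on $I:=S_g\cap S_{g_0}$, establish this by a fixed-point iteration that would force some level set of $m$ to be infinite if $\phi\not\equiv 0$, and then promote the identity to $S_g=S_{g_0}$ by strict monotonicity. Since $g$ and $g_0$ are continuous and strictly increasing on $[c,d]$, both $S_g=g^{-1}([a,b])\cap[c,d]$ and $S_{g_0}$ are closed subintervals, and so is $I$. The definition of $\G$ supplies a point $s_0\in I$ at which the continuous function $\phi$ vanishes.

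Suppose for contradiction that $\phi\not\equiv 0$ on $I$. Then the open subset $\{\phi\neq 0\}$ of $I$ has a connected component $(\alpha,\beta)$ on which $\phi$ retains a constant nonzero sign. A brief case inspection---either an endpoint of the component lies in the interior of $I$ and is thus a zero of $\phi$, or the component equals the interior of $I$, in which case $s_0$ must sit at one of its endpoints---lets me choose $\alpha$ with $\phi(\alpha)=0$; WLOG $\phi>0$ on $(\alpha,\beta)$ (the opposite sign is symmetric). Fix $\eta>0$ with $\alpha+\eta<\beta$, set $t^*=g(\alpha)=g_0(\alpha)$, and define the reparametrization
\begin{equation*}
\psi=g\circ g_0^{-1}\colon[t^*,g_0(\alpha+\eta)]\to[t^*,g(\alpha+\eta)].
\end{equation*}
This $\psi$ is continuous, strictly increasing, fixes $t^*$, and satisfies $\psi(u)>u$ for $u>t^*$; meanwhile the hypothesis $m\circ g=m\circ g_0$ translates into $m\circ\psi=m$ on the same interval.

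Iterate $\psi^{-1}$. For any $u_0\in(t^*,g_0(\alpha+\eta)]$, the sequence $u_{n+1}=\psi^{-1}(u_n)$ is well defined, strictly decreasing in $(t^*,u_0]$, and must converge to the unique fixed point $t^*$ of $\psi^{-1}$ in this range. Since $\psi$ preserves values of $m$, a simple induction gives $m(u_n)=m(u_0)$ for every $n$, exhibiting infinitely many distinct points of the compact interval $[t^*,g_0(\alpha+\eta)]$ at which $m$ attains the single value $m(u_0)$---a direct violation of Assumption A1. Hence $\phi\equiv 0$ on $I$. I expect this iteration step to be the main obstacle: both the construction of $\psi$ so that its orbit stays inside the common domain, and the case analysis that pins down a zero of $\phi$ at the boundary of each component of $\{\phi\neq 0\}$, require some care.

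With $g\equiv g_0$ on $I$ in hand, endpoint matching is short. If the right endpoints of $S_g$ and $S_{g_0}$ disagreed, say $\beta_g<\beta_{g_0}$, then $\beta_g<d$ and $g(\beta_g)=b$, and since $\beta_g\in I$ we get $g_0(\beta_g)=b$; strict monotonicity of $g_0$ then makes $\beta_g$ the right endpoint of $S_{g_0}$ as well, a contradiction. The symmetric argument at the left endpoint yields $S_g=S_{g_0}$, whence $g(s)=g_0(s)$ for all $s\in S_{g_0}$.
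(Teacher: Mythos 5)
Your proof is correct, and it reaches the contradiction with Assumption~A1 by a genuinely different mechanism than the paper. The paper works directly on the compositions: taking the agreement point $\alpha$ and a point $s$ with $g(s)\neq g_0(s)$, it intersects the level set $\{t:m(t)=m(g_0(s))\}$ with $[g_0(\alpha),g_0(s)]$, uses A1 to make it finite, and then argues that $m(g(u))-m(g(s))$ and $m(g_0(u))-m(g_0(s))$, which must coincide on $[\alpha,s]$, vanish a different number of times ($i$ versus $k$ with $i<k$), a counting contradiction. You instead transfer everything to the range: the relation $m\circ g=m\circ g_0$ on $S_g\cap S_{g_0}$ becomes $m\circ\psi=m$ for $\psi=g\circ g_0^{-1}$ near a fixed point $t^*$ coming from a zero of $g-g_0$ adjacent to a component where the sign is constant, and iterating $\psi$ (or $\psi^{-1}$) produces infinitely many distinct points of a compact interval on which $m$ takes a single value, contradicting A1 outright. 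What your route buys is robustness: it needs only that the level set be infinite, and so it sidesteps the delicate step in the paper's argument of asserting an exact count of zero crossings (zeros of $m-\mathrm{const}$ need not be crossings, and the paper's count is really a count of zeros); your endpoint argument for $S_g=S_{g_0}$ is also more explicit than the paper's one-line appeal to continuity. What the paper's route buys is brevity once the counting is granted. Two small points of care on your side, both of which work out: the component of $\{\phi\neq0\}$ may have its zero of $\phi$ only at its \emph{right} endpoint (e.g.\ a component of the form $[P,\beta)$ abutting the left end of $I$), a case not literally covered by your ``WLOG on the sign''; you then need the mirrored iteration toward $t^*=g(\beta)=g_0(\beta)$ from the left, which is genuinely symmetric and goes through. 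Also, like the paper, you read membership in $\G$ as implying continuity and strict monotonicity of $g$, which is the intended interpretation used throughout the paper's own proof.
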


        The condition~A1 ensures that $m$ is not flat or it does not fluctuate too much.

        The consistency of $\hat{g}_n$ needs to be established as the sample size in both the data sets go to infinity. As a first step, we establish the point-wise convergence of the functionals $L_n$ on $\G$. In order to identify an appropriate probability limit of $L_n(g)$, note that the empirical distributions in \eqref{Ln_interpret} should converge to the corresponding true distributions and $h_t$ and $h_y$ should go to zero, as the sample sizes go to infinity. Therefore, $L_n(g)$ should converge to
        \begin{equation}
            L(g)=\lim_{\delta\downarrow0}\frac1{\delta}pr\left((y_1(t)-y_2(s))\in[0,\delta]\Bigm|t=g(s)\right),
        \label{Lg_interpret}
        \end{equation}
        where $y_1(t)=m(t)+\epsilon_1(t)$ and $y_2(s)=m(g_0(s))+\epsilon_1(s)$ as per the model~\eqref{eq:the_model1} and $t$, $s$, $\epsilon_1(t)$ and $\epsilon_2(s)$ are random samples from the densities $f_1$, $f_2$, $f_{\epsilon_1}$ and $f_{\epsilon_2}$, respectively. This expression simplifies to
        \begin{eqnarray}\label{eq:L}
                L(g)=\frac{\intinfinf\intinfinf f_1(g(y))f_2(y)f_{\epsilon_1}(v-m(g(y))+m(g_0(y)))f_{\epsilon_2}(v) dydv}{\intinfinf f_1(g(y))f_2(y)dy}.
        \end{eqnarray}

        We now formally prove the point-wise convergence of the functionals $L_n$ on $\G$ in the next theorem.

        \begin{theorem}\label{thm:pw_conv}
            Let the class $\G$ be as described at the beginning of this section. Further, let the following assumptions hold in respect of the model~\eqref{eq:model} and the Functional~\eqref{Ln_interpret}.              \begin{description}
            \item [A1*] The function $m$ is continuous.
            \item [A3] The densities $f_{\epsilon_1}$, $f_{\epsilon_2}$, $f_1$, and $f_2$ are continuous and bounded; $f_1$ and $f_2$ are positive over the interior of their supports.\label{asm:1:density:short}
            \item [A4] The kernels $K_1$ and $K_2$ are continuous and bounded probability density functions defined over the real line, and $K_2$ has bounded derivative.\label{asm:2:kernels}
            \item [A5] The sample sizes $n_1$ and $n_2$ and the bandwidths $h_t$ and $h_y$ are such that $n_1/n\rightarrow\xi$ for some $\xi\in(0,1)$, $h_t\rightarrow0$, $h_y\rightarrow0$ and $n^2h_th_y^2\rightarrow\infty$ as $n\rightarrow\infty$.\label{asm:4:h1h2}
            \end{description}
            Then, for any function $g\in\G$,  $L_n(g)$ tends to $L(g)$ in probability as $n\rightarrow\infty$, where $L(g)$ is defined in \eqref{eq:L}.
        \end{theorem}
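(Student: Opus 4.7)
The plan is to write $L_n(g)=N_n(g)/D_n(g)$ as a ratio of two double averages (the numerator and denominator appearing in \eqref{eq:Ln}), prove that each converges in probability to the corresponding numerator and denominator of $L(g)$, and conclude by Slutsky's theorem. For the limiting ratio to be well-defined, the limiting denominator must be positive, which is guaranteed by A3 (giving $f_1,f_2>0$ on the interiors of their supports) together with the defining property of $\G$ that $S_g=g^{-1}([a,b])\cap[c,d]$ contains a non-empty open interval, so that $\int f_1(g(y))f_2(y)\,dy>0$.

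For the denominator $D_n(g)=(n_1 n_2)^{-1}\sum_{i,j}W_{ij}$ with $W_{ij}=h_t^{-1}K_1((t_i-g(s_j))/h_t)$, I would first compute $E[W_{ij}]$: since $t_i,s_j$ are independent, the substitution $u=(t-g(s))/h_t$ gives $E[W_{ij}]=\int\int K_1(u)f_1(g(s)+h_t u)f_2(s)\,du\,ds$, which tends to $\int f_1(g(s))f_2(s)\,ds$ as $h_t\to 0$ by dominated convergence, using continuity and boundedness of $f_1$ (A3) and the fact that $K_1$ is a density (A4). For the variance I split the quadruple sum $\sum_{i,j,i',j'}\mathrm{Cov}(W_{ij},W_{i'j'})$ into four index patterns: the pattern $i\neq i',j\neq j'$ contributes zero by independence; the diagonal $i=i',j=j'$ contributes $O(1/(n_1 n_2 h_t))$ since $E[W_{11}^2]=O(1/h_t)$; and the two single-coincidence patterns contribute $O(1/n_1)$ and $O(1/n_2)$ since those covariances are $O(1)$. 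All three vanish under A5, because $n_1,n_2\to\infty$ and $h_y\to 0$ combined with $n^2 h_t h_y^2\to\infty$ force $n^2 h_t\to\infty$.

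For the numerator $N_n(g)=(n_1 n_2)^{-1}\sum_{i,j}V_{ij}$ with $V_{ij}=(h_t h_y)^{-1}K_1(\cdot)K_2(\cdot)$, the argument is parallel but requires a second change of variable $w=(y_{1i}-y_{2j})/h_y$, performed after $u=(t_i-g(s_j))/h_t$; using $y_{1i}=m(t_i)+\epsilon_{1i}$ and $y_{2j}=m(g_0(s_j))+\epsilon_{2j}$ with $m$ only continuous, the map $(t_i,\epsilon_{1i})\mapsto(u,w)$ is triangular and yields Jacobian $h_t h_y$. Applying continuity of $m$ (A1*) and of the error densities (A3) together with dominated convergence gives $E[V_{ij}]\to\int\int f_1(g(y))f_2(y)f_{\epsilon_1}(v-m(g(y))+m(g_0(y)))f_{\epsilon_2}(v)\,dy\,dv$, which is the numerator of $L(g)$ in \eqref{eq:L}. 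The variance decomposition mirrors the denominator case, with $E[V_{11}^2]=O(1/(h_t h_y))$ and dominant variance term of order $1/(n_1 n_2 h_t h_y)$; this vanishes under A5 because $h_y\le 1$ eventually gives $1/(h_t h_y)\le 1/(h_t h_y^2)$.

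The main technical obstacle is showing that the covariance between two summands sharing exactly one index (such as $V_{11}$ and $V_{12}$) remains $O(1)$ uniformly as $h_t,h_y\to 0$, rather than blowing up. The plan is to condition on the shared variables: given $(t_1,\epsilon_{11})$, the pair $(V_{11},V_{12})$ is conditionally independent (as $V_{11}$ depends on $(s_1,\epsilon_{21})$ and $V_{12}$ on $(s_2,\epsilon_{22})$), so $E[V_{11}V_{12}\mid t_1,\epsilon_{11}]=(E[V_{11}\mid t_1,\epsilon_{11}])^2$, and it suffices to show that $E[V_{11}\mid t_1,\epsilon_{11}]$ is bounded uniformly in $h_t,h_y$; this follows from integrating out $\epsilon_{21}$ first (bounded by $\|f_{\epsilon_2}\|_\infty$), then $s_1$ via the substitution $u=(t_1-g(s_1))/h_t$, using the lower bound on $g'$ built into $\G$ and the boundedness of $f_2$ (A3). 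Once this is in place, convergence in probability of both $D_n(g)$ and $N_n(g)$ to their stated limits, combined with positivity of the denominator limit, gives $L_n(g)\to L(g)$ in probability by Slutsky's theorem.
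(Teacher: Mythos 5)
Your proposal matches the paper's argument in all essentials: the same decomposition $L_n=N_n/D_n$, convergence of expectations of $N_n(g)$ and $D_n(g)$ via the change of variables $u=(t-g(s))/h_t$, $w=(y_1-y_2)/h_y$ plus dominated convergence (using A1*, A3, A4), the same four-way split of the covariance sum with the all-distinct pattern vanishing and the diagonal and single-coincidence terms of orders $O(1/(n^2h_th_y))$, $O(1/(n^2h_t))$ and $O(1/n)$ controlled under A5 (including the lower bound on $g'$ for the integral over $s$), positivity of $D(g)$ from A3 and the defining property of $\G$, and the final Slutsky/continuous-mapping step. The only cosmetic difference is that you bound the single-coincidence covariances by conditioning on the shared variable while the paper computes their scaled limits explicitly, which amounts to the same calculation; just perform the two affine substitutions iteratively (via Fubini) so that no differentiability of $m$ is invoked in the ``triangular Jacobian'' step.
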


        The condition~A4 is satisfied by all the popular kernels viz. uniform, triangular, Epanechnikov, biweight, Gaussian, and so on.

        We now show that the limiting functional $L(g)$ is maximized only by the correct transformation function.

        \begin{theorem}\label{thm:uniqueness}
            Suppose the class of functions $\G$ is as described at the beginning of this section, and the Assumptions~A1 and~A2 hold, along with
            \begin{description}
                \item[A3*] The densities $f_{\epsilon_1}$, $f_{\epsilon_2}$, $f_1$, and $f_2$ are continuous and bounded; $f_{\epsilon_1}$ and $f_{\epsilon_2}$ are symmetric about zero and are strictly unimodal at zero; $f_1$ and $f_2$ are positive over the interior of their supports.
            \end{description}
            Then,
            \begin{enumerate}
                \item[(a)] $L(g)\le L(g_0)$ for all $g\in\G$,
                \item[(b)] If $L(g)= L(g_0)$ for some $g\in\G$,
                then $g=g_0$ over $S_{g_0}$.
            \end{enumerate}
        \end{theorem}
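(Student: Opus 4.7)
The plan is to reduce both parts of the theorem to a single inequality involving a convolved error density, and then invoke the identifiability result in Theorem~\ref{thm:identifiability}.

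First I would rewrite $L(g)$ in a more tractable form. Define
\[
    \phi(u) := \intinfinf f_{\epsilon_1}(v-u)\,f_{\epsilon_2}(v)\,dv, \qquad \Delta(y) := m(g(y)) - m(g_0(y)),
\]
so that $\phi$ is (by the symmetry of $f_{\epsilon_1}$) the convolution $f_{\epsilon_1}\ast f_{\epsilon_2}$, i.e.\ the density of $\epsilon_2-\epsilon_1$. With this, the numerator and denominator of \eqref{eq:L} become
\[
    N(g) = \intinfinf f_1(g(y))f_2(y)\,\phi(\Delta(y))\,dy,\qquad
    D(g) = \intinfinf f_1(g(y))f_2(y)\,dy,
\]
so $L(g) = \int \phi(\Delta(y))\,dP_g(y)$, where $P_g$ is the probability measure on $[c,d]$ with density $f_1(g(y))f_2(y)/D(g)$. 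In particular $L(g_0) = \phi(0)$, since $\Delta\equiv 0$ when $g=g_0$.

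The crux of the proof is the claim that $\phi$ is continuous, symmetric about $0$, and strictly unimodal at $0$: $\phi(u) < \phi(0)$ for every $u\neq 0$. Continuity and symmetry are immediate from A3* and the symmetry of $f_{\epsilon_1},f_{\epsilon_2}$ via dominated convergence. For the strict inequality I would use the layer-cake representation: by strict unimodality and symmetry, each level set $\{v:f_{\epsilon_i}(v)>t\}$ is an open symmetric interval $(-a_i(t),a_i(t))$ with $a_i(t)>0$ on a set of positive Lebesgue measure. Writing $f_{\epsilon_i}(v) = \int_0^\infty \mathbf{1}\{|v|<a_i(t)\}\,dt$ and inserting into $\phi$ yields
\[
    \phi(u) = \intzeroinf\intzeroinf \bigl|(-a_2(s),a_2(s))\cap (u-a_1(t),u+a_1(t))\bigr|\,dt\,ds,
\]
and elementary interval geometry shows the integrand is maximized pointwise at $u=0$ and is strictly smaller for $u\neq 0$ on a set of $(t,s)$ of positive measure. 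Integrating gives strict unimodality.

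Granted this, part (a) is immediate: $L(g) = \int \phi(\Delta(y))\,dP_g(y) \le \phi(0) = L(g_0)$. For part (b), equality $L(g) = \phi(0)$ forces $\phi(\Delta(y)) = \phi(0)$ for $P_g$-almost every $y$, whence $\Delta(y) = 0$ $P_g$-almost everywhere by strict unimodality. Because of A3*, $P_g$ has positive density on the open set $g^{-1}((a,b))\cap(c,d)$, which is nonempty since $S_g$ contains a nonempty open interval. Continuity of $m\circ g$ and $m\circ g_0$ then upgrades the a.e.\ identity to $m(g(y)) = m(g_0(y))$ on all of $S_g\cap S_{g_0}$. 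Applying Theorem~\ref{thm:identifiability} delivers $g=g_0$ on $S_{g_0}$.

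The main obstacle I anticipate is the strict unimodality of the convolved density $\phi$. Unimodality of a convolution of symmetric unimodal densities is classical, but promoting it to a strict statement requires the careful level-set/layer-cake argument above. A secondary subtlety is ensuring that $P_g$ actually charges a nonempty open subset of $S_g\cap S_{g_0}$, so that the almost-sure identity $\Delta\equiv 0$ can be lifted by continuity to the full domain required by Theorem~\ref{thm:identifiability}; this is where the assumption that $S_g$ contains a nonempty open interval, together with the positivity assumption in A3*, is essential.
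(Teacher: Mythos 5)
Your proposal is correct and follows the paper's route almost exactly: integrate out $v$ to write $L(g)$ as a weighted average (with weight $f_1(g(y))f_2(y)$) of the convolved error density $\phi=f_{\epsilon_1}\ast f_{\epsilon_2}$ evaluated at $m(g(y))-m(g_0(y))$, note $L(g_0)=\phi(0)$, get (a) from $\phi(u)\le\phi(0)$, and get (b) by forcing $m(g(y))=m(g_0(y))$ on $S_g\cap S_{g_0}$ and invoking Theorem~\ref{thm:identifiability}. The only genuine divergence is in how the key unimodality fact is proved. The paper's Lemma~\ref{lem:unimodality} establishes the stronger property that $\phi$ is strictly decreasing away from zero, via a change-of-variables and reflection identity that expresses $\phi(u_2)-\phi(u_1)$ (for $0<u_1<u_2$) as an integral of a product of two factors of opposite strict sign; you instead use a layer-cake representation of $f_{\epsilon_1}$ and $f_{\epsilon_2}$ and interval-overlap geometry to obtain the weaker, but entirely sufficient, statement $\phi(u)<\phi(0)$ for $u\neq0$. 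Both arguments work; the step you should spell out in the layer-cake route is the strictness claim, namely that for each $u\neq0$ the set of $(t,s)$ with $a_1(t),a_2(s)>0$ and $|a_1(t)-a_2(s)|<|u|$ has positive measure --- true here because the level-set radii sweep continuously over $(0,\infty)$ (the error densities have full support), but it is the one assertion you leave to ``elementary geometry.'' A minor point in your favour: your passage from the almost-everywhere identity on $\{y: f_1(g(y))f_2(y)>0\}$ to the everywhere identity on $S_g\cap S_{g_0}$, via continuity of $m\circ g$ and $m\circ g_0$ and density of the open set $g^{-1}((a,b))\cap(c,d)$ in $S_g$, is made explicit, whereas the paper asserts this step without comment.
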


        The next step is to establish the uniform convergence of $L_n$, for which we need a stronger condition on $\G$ that enforces compactness.
        Let us define a metric on $\G$ viz., $\Delta(g_1, g_2)=\sup_{x\in[c,d]}|g_1(x)-g_2(x)|=\|g_1-g_2\|$.

        \begin{theorem}\label{thm:unif_conv}
            Let the class of functions $\G$ be as described at the beginning of this section. Suppose the class $\G_0$ in~\eqref{eq:gn_hat} is a compact subset of $\G$ in the metric space $(\G, \Delta)$.\label{asm:5:G} Then under Assumptions~A1*, A3, A5 and the additional assumption
            \begin{description}
                \item[A4*] The kernels $K_1$ and $K_2$ are bounded probability density functions defined over the real line and bounded away from zero on a given closed interval and have bounded first order derivatives,\label{asm:2A:kernels2}
            \end{description}
            the quantity
            $$\sup_{g\in\G_0}\left|L_n(g)-L(g)\right|$$
            tends to zero in probability as $n\rightarrow\infty$.
        \end{theorem}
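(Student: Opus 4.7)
The approach is to reduce the question to uniform convergence of the numerator and the denominator of $L_n$ separately, then combine pointwise convergence on a finite $\eta$-net (enabled by compactness of $\G_0$) with a deterministic equicontinuity bound extracted from A4*. Write $L_n(g)=N_n(g)/D_n(g)$ and $L(g)=N(g)/D(g)$, where $D_n$ is the denominator in \eqref{eq:Ln} and $D(g)=\intinfinf f_1(g(y))f_2(y)\,dy$. Using A3 and the fact that every $g\in\G$ has $S_g$ containing a non-empty open interval on which both $f_2$ and $f_1\circ g$ are positive, $D(\cdot)$ is continuous on $(\G,\Delta)$ and strictly positive at each $g\in\G_0$; compactness of $\G_0$ then yields $\inf_{g\in\G_0}D(g)>0$. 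Thus, by a pointwise continuous-mapping argument applied to ratios, it suffices to prove $\sup_{g\in\G_0}|N_n(g)-N(g)|\stackrel{p}{\to}0$ and $\sup_{g\in\G_0}|D_n(g)-D(g)|\stackrel{p}{\to}0$.

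For each of these I would use a three-piece decomposition: fix $\eta>0$, pick a finite $\eta$-net $g^{(1)},\ldots,g^{(N_\eta)}$ in $(\G_0,\Delta)$ (available by total boundedness), and write
\[
\sup_{g\in\G_0}|D_n(g)-D(g)|\;\le\;\max_{k\le N_\eta}|D_n(g^{(k)})-D(g^{(k)})|\;+\;\sup_{\Delta(g,g')\le\eta}|D_n(g)-D_n(g')|\;+\;\sup_{\Delta(g,g')\le\eta}|D(g)-D(g')|,
\]
and similarly for $N_n$. The third term is $o(1)$ as $\eta\to 0$ by dominated convergence combined with the uniform continuity of $m$, $f_1$, $f_2$, $f_{\epsilon_1}$, $f_{\epsilon_2}$ on their supports. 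The middle term is controlled deterministically using A4*: the mean value theorem applied to $K_1$ gives
\[
\left|\frac{1}{h_t}K_1\!\left(\frac{t_i-g(s_j)}{h_t}\right)-\frac{1}{h_t}K_1\!\left(\frac{t_i-g'(s_j)}{h_t}\right)\right|\;\le\;\frac{\|K_1'\|_\infty}{h_t^{2}}\,\eta,
\]
which averages to $|D_n(g)-D_n(g')|\le C\eta/h_t^2$, and, after a similar computation that uses boundedness of $K_2$, to $|N_n(g)-N_n(g')|\le C'\eta/(h_t^{2}h_y)$. The first term, for fixed $\eta>0$, is $o_p(1)$ by applying Theorem~\ref{thm:pw_conv} at each of the $N_\eta$ net points and taking a finite union bound.

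The main obstacle is that the equicontinuity bound $C\eta/h_t^2$ blows up as $h_t\to 0$, so $\eta$ cannot be fixed: one must take $\eta=\eta_n\to 0$ with $\eta_n/(h_t^{2}h_y)\to 0$. This in turn forces the covering number $N_{\eta_n}$ to grow with $n$, so the finite union bound on the first term must now be paired with a variance estimate. Direct calculation using independence of the samples gives $\mathrm{Var}(D_n(g))=O(1/n)$ and $\mathrm{Var}(N_n(g))=O(1/(n^{2}h_th_y^{2}))$, so Chebyshev's inequality combined with a union bound reduces the requirement to $N_{\eta_n}=o(n^{2}h_th_y^{2})$. Compactness of $\G_0$ guarantees $N_{\eta_n}<\infty$ for every $\eta_n>0$, and the slack $n^{2}h_th_y^{2}\to\infty$ built into A5 is precisely what allows the two constraints on $\eta_n$ to be reconciled for a regular compact class (for instance, a uniformly Lipschitz family, for which $N_\eta$ grows only sub-exponentially). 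The concluding step is a routine continuous-mapping argument: the uniform convergences of $N_n$ and $D_n$, together with $\inf_{g\in\G_0}D(g)>0$, deliver uniform convergence of the ratio $L_n=N_n/D_n$ to $L=N/D$.
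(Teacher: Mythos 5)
There is a genuine gap, and it sits exactly at the point you flagged as "the main obstacle." Your plan requires uniform convergence of $N_n$ and $D_n$ separately via a net whose mesh $\eta_n$ must shrink fast enough that the oscillation bounds $C\eta_n/h_t^2$ and $C'\eta_n/(h_t^2h_y)$ vanish, and then you need the covering number $N_{\eta_n}$ of $\G_0$ to satisfy $N_{\eta_n}=o(n^2h_th_y^2)$ (in fact you also need $N_{\eta_n}=o(n)$, since the off-diagonal covariance terms make $\mathrm{var}(N_n(g))$ and $\mathrm{var}(D_n(g))$ of order $1/n$, not just $1/(n^2h_th_y)$). Neither requirement is available under the theorem's hypotheses: A5 imposes no rates at all (only $h_t,h_y\to0$ and $n^2h_th_y^2\to\infty$), and compactness of $\G_0$ in the sup metric gives finiteness of $N_\eta$ for each fixed $\eta$ but no control on its growth as $\eta\to0$. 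Even your "regular" example fails: a uniformly Lipschitz ball has $N_\eta\asymp\exp(C/\eta)$ (exponential, not sub-exponential, in $1/\eta$), so $\eta_n\ll h_t^2h_y$ forces $N_{\eta_n}$ to blow up super-polynomially unless the bandwidths shrink logarithmically slowly, which A5 does not guarantee. So the reconciliation of the two constraints on $\eta_n$ is not possible in general, and the proof as proposed does not go through under the stated assumptions.

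The fix — and this is where the paper goes a different way — is to never uniformize $N_n$ and $D_n$ separately, but to exploit the ratio structure together with the part of A4* you did not use: the lower bound $K_1\ge c>0$ on the relevant range. This gives the deterministic bound $D_n(g)\ge c/h_t$, and then writing $|L_n(\tilde g)-L_n(g)|\le (h_t^2/c^2)\{D_n(\tilde g)|N_n(\tilde g)-N_n(g)|+N_n(\tilde g)|D_n(\tilde g)-D_n(g)|\}$ and applying the mean value theorem to $K_1$ exactly as you did, the factor $h_t^2$ cancels the $h_t^{-2}$ from the oscillation bounds, while the $1/h_y$ is absorbed into $U_n=(n_1n_2h_y)^{-1}\sum_i\sum_jK_2\{(y_{1i}-y_{2j})/h_y\}$, which converges in probability to a finite constant. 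The result is a stochastic Lipschitz bound $|L_n(\tilde g)-L_n(g)|\le B_n(\tilde g)\|g-\tilde g\|$ with $B_n(\tilde g)=O_p(1)$, free of any bandwidth blow-up. One then covers $\G_0$ by a \emph{fixed} finite subcover (depending on $\epsilon$ only), applies the pointwise convergence Theorem~\ref{thm:pw_conv} at the finitely many centres, and uses uniform continuity of $L$ on the compact $\G_0$ for the remaining term; no entropy condition and no shrinking mesh are needed. Your decomposition and your use of the mean value theorem are the right ingredients, but they must be applied to $L_n$ itself, with the kernel lower bound supplying the denominator control, rather than to $N_n$ and $D_n$ separately.
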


        An example of $\G_0$ that satisfies the requirements stated in the theorem is the subset of functions $g$ of $\G$ with bounded slope. Assumption~A4* is satisfied by, e.g., the Gaussian kernel.

        We now establish that the sequence of maximum values of the functionals $L_n$ converges to the value of $L$ at its maximizer, $g_0$.

        \begin{theorem}\label{thm:Lngn_hat_conv}
            Let the class of functions $\G$ be as described at the beginning of this section. Suppose the class $\G_0$ in~\eqref{eq:gn_hat} is a compact subset of $\G$ in the metric space $(\G, \Delta)$. Then under Assumptions~A1*, A3*, A4*, A5 and the additional assumption
            \begin{description}
                \item[A2*] The function $g_0$ is a member of the class $\G_0$,
            \end{description}
             the quantity $L_n(\hat{g}_n)$ tends to $L(g_0)$ in probability as $n\rightarrow\infty$.
        \end{theorem}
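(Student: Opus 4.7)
The plan is to obtain the conclusion by a two-sided sandwich, exploiting three ingredients that are already in place: the uniform convergence of $L_n$ to $L$ over $\G_0$ (Theorem~\ref{thm:unif_conv}), the fact that $L$ is uniquely maximized on $\G_0$ at $g_0$ (Theorem~\ref{thm:uniqueness}), and the assumption A2* that $g_0$ actually lies in the feasible set $\G_0$. I first check that the hypotheses of Theorems~\ref{thm:pw_conv}, \ref{thm:uniqueness}, and~\ref{thm:unif_conv} are all subsumed by the assumptions A1*, A3*, A4*, A5 (noting A3* implies A3 and A4* implies A4), and that A2* gives $g_0 \in \G_0 \subseteq \G$, so all three earlier conclusions are available.

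For the upper bound on $L_n(\hat g_n) - L(g_0)$, I would write
\begin{equation*}
L_n(\hat g_n) - L(g_0) \;=\; \bigl[L_n(\hat g_n) - L(\hat g_n)\bigr] + \bigl[L(\hat g_n) - L(g_0)\bigr].
\end{equation*}
The second bracket is non-positive because $\hat g_n \in \G_0$ and Theorem~\ref{thm:uniqueness}(a) gives $L(g) \le L(g_0)$ for every $g \in \G_0$. The first bracket is dominated by $\sup_{g \in \G_0}|L_n(g) - L(g)|$, which vanishes in probability by Theorem~\ref{thm:unif_conv}. Hence $L_n(\hat g_n) - L(g_0) \le o_p(1)$.

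For the matching lower bound, I would use the defining property of $\hat g_n$ as the maximizer of $L_n$ over $\G_0$: since $g_0 \in \G_0$ by A2*, we have $L_n(\hat g_n) \ge L_n(g_0)$, and therefore
\begin{equation*}
L_n(\hat g_n) - L(g_0) \;\ge\; L_n(g_0) - L(g_0) \;=\; o_p(1),
\end{equation*}
where the final step uses either pointwise convergence at $g = g_0$ (Theorem~\ref{thm:pw_conv}) or, as a special case, the uniform bound from Theorem~\ref{thm:unif_conv}. Combining the two bounds gives $L_n(\hat g_n) \to L(g_0)$ in probability.

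In this proof there is no genuine obstacle: all the analytic work has been absorbed into the earlier theorems, and the argument is essentially the standard argmax-in--argsup sandwich from empirical process theory. The only point requiring mild care is verifying that the assumption set of Theorem~\ref{thm:Lngn_hat_conv} indeed implies those of Theorems~\ref{thm:pw_conv}--\ref{thm:unif_conv}, so that each invoked conclusion is legitimately available; everything else follows by inserting the trivial inequality $L_n(\hat g_n) \ge L_n(g_0)$ and the maximality of $g_0$ for $L$ on $\G_0$.
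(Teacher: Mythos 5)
Your proposal is correct and takes essentially the same route as the paper: both arguments rest on the inequality $L_n(\hat{g}_n)\ge L_n(g_0)$ (available because $g_0\in\G_0$ by A2*), the bound $L(\hat{g}_n)\le L(g_0)$ from Theorem~\ref{thm:uniqueness}(a), uniform convergence over $\G_0$ (Theorem~\ref{thm:unif_conv}) for the term $|L_n(\hat{g}_n)-L(\hat{g}_n)|$, and pointwise convergence at $g_0$ (Theorem~\ref{thm:pw_conv}). The paper merely packages the same two-sided sandwich as a bound on $pr(|L_n(\hat{g}_n)-L(g_0)|>\epsilon)$ through an explicit event decomposition, whereas you state it in $o_p(1)$ form; the content is identical.
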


        We are now ready to establish the consistency of our estimator through the final theorem.

        \begin{theorem}\label{thm:consistency}
            Let the class of functions $\G$ be as described at the beginning of this section. Suppose the class $\G_0$ in~\eqref{eq:gn_hat} is a compact subset of $\G$ in the metric space $(\G, \Delta)$. Then, under the Assumptions~A1, A2*, A3*, A4* and A5, $\hat{g}_n$ tends to $g_0$ in probability as $n\rightarrow\infty$.
        \end{theorem}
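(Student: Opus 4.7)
The plan is to invoke the standard argmax consistency argument: combine the uniform convergence of $L_n$ to $L$ over the compact class $\G_0$ established in Theorem~\ref{thm:unif_conv} with the unique maximization of $L$ at $g_0$ established in Theorem~\ref{thm:uniqueness}. Fix $\varepsilon>0$ and set $F_\varepsilon=\{g\in\G_0:\Delta(g,g_0)\ge\varepsilon\}$; the task reduces to showing that the separation constant $\eta_\varepsilon:=L(g_0)-\sup_{g\in F_\varepsilon}L(g)$ is strictly positive, after which a deterministic comparison on a high-probability event closes the argument.

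The key auxiliary ingredient is continuity of $L:\G_0\to\R$ in the metric $\Delta$. If a sequence in $\G_0$ converges in $\Delta$ to some $g\in\G_0$, then by the continuity of $m$ in~A1 and of $f_1,f_{\epsilon_1}$ in~A3* together with the boundedness of all four densities, dominated convergence applies to both the numerator and the denominator in the closed-form expression~\eqref{eq:L}. To guarantee the ratio is continuous, I need the denominator bounded away from zero uniformly on $\G_0$: for each $g\in\G_0$ the set $g^{-1}((a,b))\cap(c,d)$ contains a non-empty open interval on which $f_1(g(\cdot))$ and $f_2$ are both strictly positive, so the denominator is positive at every $g\in\G_0$, and continuity in $g$ together with compactness of $\G_0$ yields a uniform positive lower bound. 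With $L$ continuous and $F_\varepsilon$ compact, the supremum in the definition of $\eta_\varepsilon$ is attained at some $g^\star\in F_\varepsilon$, and Theorem~\ref{thm:uniqueness} then forces $L(g^\star)<L(g_0)$, giving $\eta_\varepsilon>0$.

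On the event $E_n=\{\sup_{g\in\G_0}|L_n(g)-L(g)|<\eta_\varepsilon/2\}$, whose probability tends to one by Theorem~\ref{thm:unif_conv}, the inequalities $L_n(\gn)\ge L_n(g_0)\ge L(g_0)-\eta_\varepsilon/2$ and $L_n(g)\le L(g)+\eta_\varepsilon/2\le L(g_0)-\eta_\varepsilon/2$ for every $g\in F_\varepsilon$ together force $\gn\notin F_\varepsilon$, so $P(\Delta(\gn,g_0)\ge\varepsilon)\to 0$ as $n\to\infty$. Since $\varepsilon>0$ is arbitrary, this yields $\gn\to g_0$ in probability. The principal technical obstacle is the continuity-cum-uniform-positive-lower-bound step for $L$ on $(\G_0,\Delta)$; a secondary subtlety is that Theorem~\ref{thm:uniqueness} identifies $g_0$ only on $S_{g_0}$, so the strict inequality $L(g^\star)<L(g_0)$ in the $\Delta$ metric over the whole of $[c,d]$ is immediate when $S_{g_0}=[c,d]$ and otherwise would warrant interpreting the consistency statement as $\Delta$-convergence restricted to $S_{g_0}$, or invoking the monotonicity and slope constraints built into $\G$ to propagate agreement from $S_{g_0}$ to the complementary part of $[c,d]$.
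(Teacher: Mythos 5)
Your argument is correct and rests on the same ingredients as the paper's proof: compactness of $\G_0$, continuity of $L$ on $(\G_0,\Delta)$ (this is exactly the paper's Lemma~\ref{lem:L_unif_cont}, which you re-derive via dominated convergence and positivity of the denominator), strict separation $L(g)<L(g_0)$ on $\{g\in\G_0:\Delta(g,g_0)\ge\varepsilon\}$ from Theorem~\ref{thm:uniqueness}(b), and uniform convergence from Theorem~\ref{thm:unif_conv}. The one organizational difference is that you run the standard M-estimation argument directly -- on the event $\sup_{g\in\G_0}|L_n(g)-L(g)|<\eta_\varepsilon/2$ the inequality $L_n(\gn)\ge L_n(g_0)$ already forces $\gn$ out of $F_\varepsilon$ -- whereas the paper argues by contradiction, routing through Theorem~\ref{thm:Lngn_hat_conv} (convergence of $L_n(\gn)$ to $L(g_0)$): it picks the maximizer $\tilde g$ of $L$ on $\mathcal{N}_\epsilon^c(g_0)$, sets $\eta=L(g_0)-L(\tilde g)>0$, and shows that $\|\gn-g_0\|\ge\epsilon$ together with uniform convergence would keep $|L_n(\gn)-L(g_0)|>\eta/2$ with non-vanishing probability, contradicting that theorem. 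Your version is thus slightly more economical, since Theorem~\ref{thm:Lngn_hat_conv} is not needed at all; what the paper's detour buys is simply the reuse of an already-stated result. Finally, the subtlety you flag -- that Theorem~\ref{thm:uniqueness}(b) identifies $g$ with $g_0$ only on $S_{g_0}$, so strict separation in the sup norm over all of $[c,d]$ needs $S_{g_0}=[c,d]$ or an additional argument propagating equality off $S_{g_0}$ -- is present but unaddressed in the paper's proof as well (it asserts $\eta>0$ from Theorem~\ref{thm:uniqueness}(b) without comment), so on this point you are, if anything, more careful than the original.
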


    \section{Standard error of the estimator\label{sec:se_g_est}}
        The estimator $\hat{g}_n$ is obtained by maximizing the functional $L_n$, while the correct warping function $g_0$ is the maximizer of the limiting functional $L$. By making use of these facts, one can write $\hat{g}_n(t)-g_0(t)$ in terms of the two kernel functions and the deviations between the empirical distributions of different parts of the data with their theoretical counterparts. \citet{Wang_Gasser_1997} used a similar representation of their estimator of $g_0$ to establish its asymptotic normality, where the scaling factor is of the order of $2/7$ power of the sample size. Noting the slow rate of convergence that is expected, we do not follow this path. Instead we recommend a model-based bootstrap mechanism for estimating the distribution of $\hat{g}_n$ and its various attributes, including the standard error.

        The basis of such a bootstrap scheme would be the model~\eqref{eq:model}, with the functions $g_0$ and $m$, as well as the underlying distributions replaced by suitable estimates. The warping function $g_0$ may be estimated by $\hat{g}_n$. For estimation of $m$, one can use the Nadaraya-Watson estimator from the pooled data set, pretending as if $\hat{g}_n$ is the correct warping function. (Consistency of this estimator has been established in~\citet{Bhaumik_Sengupta_2017}) One has to be careful about generation of the time samples, as sampling with replacement from the finite set of observed time values might produce a large number of ties. A possible strategy is to use kernel density estimators of $f_1$ and $f_2$. The same strategy may be adopted for $f_{\epsilon_1}$ and $f_{\epsilon_2}$, by using the residuals in the two samples as proxies of the respective model errors.

    \section{Simulation of performance}\label{sec:sim_per}

        \subsection{Methods compared}\label{subsec:met_com}

            Even though we have proposed a general class of estimators in Section~\ref{sec:mod_met} and established their consistency in Section~\ref{sec:con}, we need to focus on a specific class in order to simulate the performance. We chose $\G_0$ as the vector space generated by linear B-spline basis functions with equidistant knot points over $[a, b]$. In order to ensure that all the weights in~\eqref{eq:Ln}, which are the ratios of kernels involving $g$, are defined for every trial $g$ in the iterative maximization process, we chose the non-vanishing Gaussian kernel for $K_1$. The other kernel $K_2$ was also chosen as Gaussian. We chose $h_t$ as half of the average horizontal separation between successive observations in the lesser dense data set and $h_y$ as 10 per cent of the combined range of $y$-values of the two data sets. The criterion~\eqref{eq:Ln} was maximized through one-dimensional grid search, rotated over the parameters till convergence. The iterations were continued till the relative change in the criterion over consecutive steps was less than 0.01 per cent. The identity map was used as initial iterate for the maximization algorithm.

            We carried out simulations to compare the performance of the above implementation of the proposed kernel-matched registration method with two other methods.
            \begin{enumerate}
                \item	The first method for comparison was continuous monotone registration. The \texttt{fda} package of \texttt{R} was used for computations, which involved two steps. In the first step, smooth curves were computed from noisy data sets by using the routine \texttt{smooth.basis}, while the warping function was estimated in the second step by using \texttt{register.fd}. Both the computations used a common B-spline basis system which was specified by the routine \texttt{create.bspline.basis} with default choice of order of b-spline basis functions~(4) and number of equidistant break-points same as the number of knot-points used for the proposed method. Functions were estimated by penalizing curvatures with a common choice of the smoothing parameter ($\lambda$) as $10^{-4}$, as suggested in \cite{Ramsay_Li_1998}.
                \item	The second method was self-modelling registration \citep{Gervini_Gasser_2004}. MATLAB codes provided by the authors were used for computations with default choices for the number of random starts (20) and the order of splines (3). The sets of data chosen for simulations revealed the existence of approximately seven prominent features. Therefore, following suggestions by the authors, the plausible value of number of components, $q$, was chosen to vary from 7 to 12, which also satisfied the prescribed rule of thumb (i.e.\ $q <$ square root of number of observations). The suggested possible values of the number of basis functions, $p$, were $3q$ or $4q$. The final choice was made by the cross-validation algorithm proposed by the authors.
            \end{enumerate}
            These methods were selected for comparison mostly on account of applicability to the data at hand and availability of codes. The wavelet based automated method of \cite{Bigot_2006} requires size of the input data sets to be equal to powers of 2, which made it unsuitable for the data we considered.

        \subsection{Simulation design}\label{subsec:sim_des}

            The function $m$ chosen for simulation, shown in Figure~\ref{fig:sim_m_fun}, has sharp movements similar to those of the paleoclimatic series of carbon dioxide. The distributions $f_{\epsilon_1}$ and $f_{\epsilon_2}$ were chosen as normal with mean 0 and standard deviation equal to $0.05s$, where $s^2=(1/T)\int_0^T(m(t)-\overline{m})^2dt$, $\overline{m}=(1/T)\int_0^T m(t)dt$ and $T=400$ (upper end of the time scale in Figure~\ref{fig:sim_m_fun}). We conducted Monte Carlo simulations (with $n_1=n_2=250$) in four different scenarios of $g_0$ and time sample selection as described below:

            \begin{description}
                \item[Scenario 1.] For all simulation runs, the sets of time samples of the two data sets were kept fixed and identical to one another. This set consisted of an uniformly spaced sample of size 250 chosen from the interval $[0, 400]$, inclusive of the end-points. The time transformation function $g_0$ in~\eqref{eq:model}, was chosen as the linear spline,
                    \begin{eqnarray}
                        g_0(t)&=&0.95t + 0.2(t-80)_{+} - 0.4(t-160)_{+} \nonumber\\
                        && + 0.6(t{-}240)_{+} - 0.55(t{-}320)_{+} \quad t\in [0, 400]\label{sim_g_1}
                    \end{eqnarray}
                    where $u_{+}$ is $u$ for positive $u$ and zero otherwise;

                \item[Scenario 2.] The $250$ time points were generated afresh for each simulation run as samples from uniform distribution over $[0, 400]$, separately for the two data sets. The function $g_0$ was chosen as in~\eqref{sim_g_1}.

                \item[Scenario 3.] The time samples were chosen as in Scenario~1, while $g_0$ was chosen as,
                    \begin{eqnarray}
                        g_0(t)=t+0.05t\sin\left(\frac{4\pi t}{400}\right)\quad t\in [0, 400].\label{sim_g_2}
                    \end{eqnarray}

                \item[Scenario 4.] The time samples were chosen as in Scenario~2, while $g_0$ was chosen as in~\eqref{sim_g_2}.
            \end{description}

            \begin{figure}
                \begin{center}
                    \includegraphics{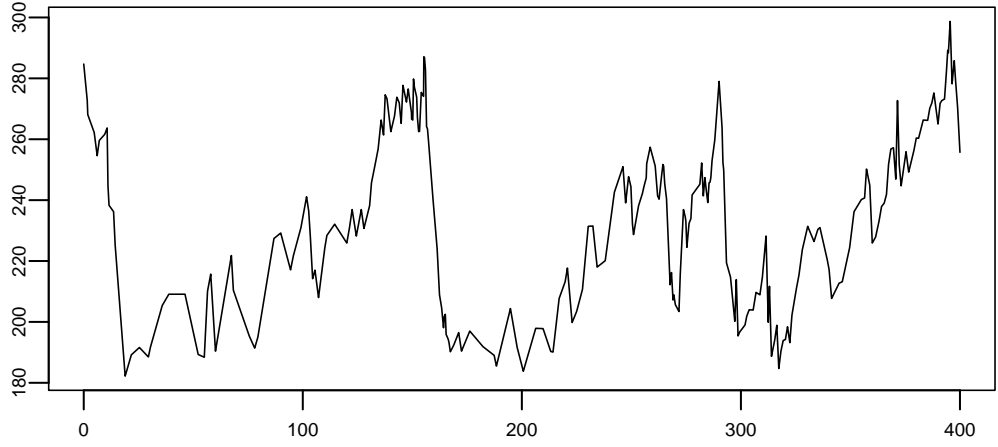}
                    \caption{The function $m$ for simulation}\label{fig:sim_m_fun}
                \end{center}
            \end{figure}

            \begin{figure}
                \begin{center}
                    \includegraphics{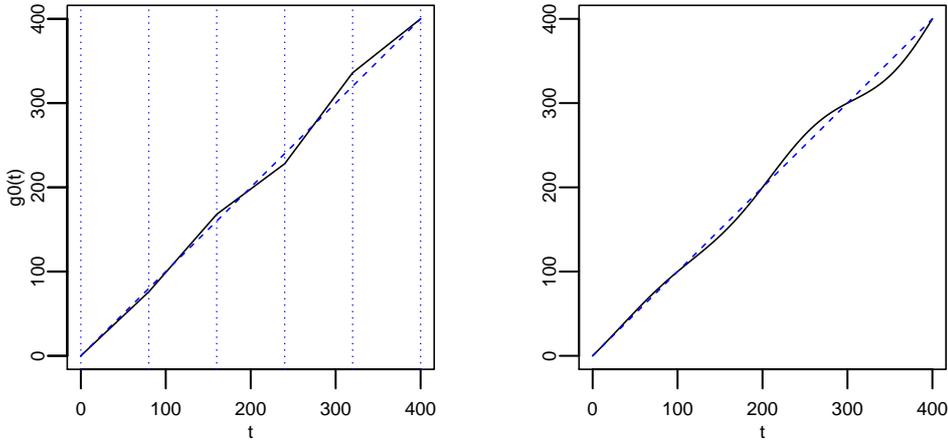}
                    \caption{Time transformation functions $g_0$ (solid lines) used in simulations and identity map (dashes): linear spline (left) and periodic (right); vertical dots indicate locations of the knots of linear spline}\label{fig:sim_g0}
                \end{center}
            \end{figure}

            For all the scenarios of simulation and the data analysis reported in the next section, the number of line segments in the search space of B-splines for the proposed algorithm was chosen as 10 times the fifth root of the size of the smaller data set (rounded to the nearest integer), as prescribed in \citet[pages 303--304]{Eubank_1999}. The same number of polynomials was used for continuous monotone registration as well. For the proposed algorithm in Scenario 1, this choice implies that $\G_0$ consists of linear B-splines with 30 line segments, which includes~\eqref{sim_g_1}.

            The MATLAB programs for self-modelling registration require values of the functions to be registered at a common set of time points, which is met only in scenario $1$ and $3$. Therefore, simulation results for this method are reported only for these two scenarios. In fact, these two scenarios were used in the simulations only to enable implementation of this method.

            The performance of the estimators of $g_0$ were studied in terms of (a)~point-wise bias, (b)~point-wise standard deviation, (c)~point-wise mean squared error, and (d)~average of the integrated mean squared error normalized by the squared norm of the true function, defined for each simulation run as
            \begin{equation}\label{imse}
                \frac{\frac{1}{S}\sum_{j=1}^{S}\int_0^T(\hat{g}_j(t)-g_0(t))^2dt}{\int_0^T g_{0}^{2}(t)dt}
            \end{equation}
            where $S$ was the number of independent runs of the simulation and $\hat{g}_j$ is the estimate of $g_0$ at the $j$th run. We used Simpson's rule to evaluate these definite integrals.

        \subsection{Results}\label{subsec:sim_res}
            The point-wise bias, standard deviation, and mean squared error of the three estimators of $g_0$, estimated from $1000$ simulation runs for each of the scenarios, are shown in Figure~\ref{fig:bias_sd_mse}.  The available implementation of Self-modelling registration works only for matched time points in the two data sets, as mentioned earlier. For this reason, it could be used only for Scenarios~1 and~3. Further, it did not converge for 13 simulation runs in Scenario 1 and 20 runs in Scenario~3. These runs were excluded from the empirical evaluations of performance.
            It is observed that the proposed estimator generally has smaller bias, standard deviation and mean squared error as compared to the other methods.
            The average normalised integrated mean squared errors of the three methods are summarized in Table~\ref{tab:imse}. The kernel-matched registration method is found to have uniformly better performance across all the scenarios.

            \begin{table}
                \caption{\label{tab:imse}Average normalized integrated mean squared error of three estimators, normalized by the squared norm ($\times\ 10^{-3}$)}
                \centering
                \fbox{%
                    \begin{tabular}{lrrrr}
                        \hline
                         Method&Scenario\ 1&Scen.\ 2&Scen.\ 3&Scen.\ 4\\
                         \hline
                         Cont.\ mon.\ registration&0.223&0.459&0.851&1.138\\
                         Self-modelling registration&0.295&--&1.303&--\\
                         Kernel-matched registration&0.004&0.208&0.003&0.338\\
                         \hline
                    \end{tabular}}
            \end{table}

            \begin{figure}
                \begin{center}
                    \includegraphics{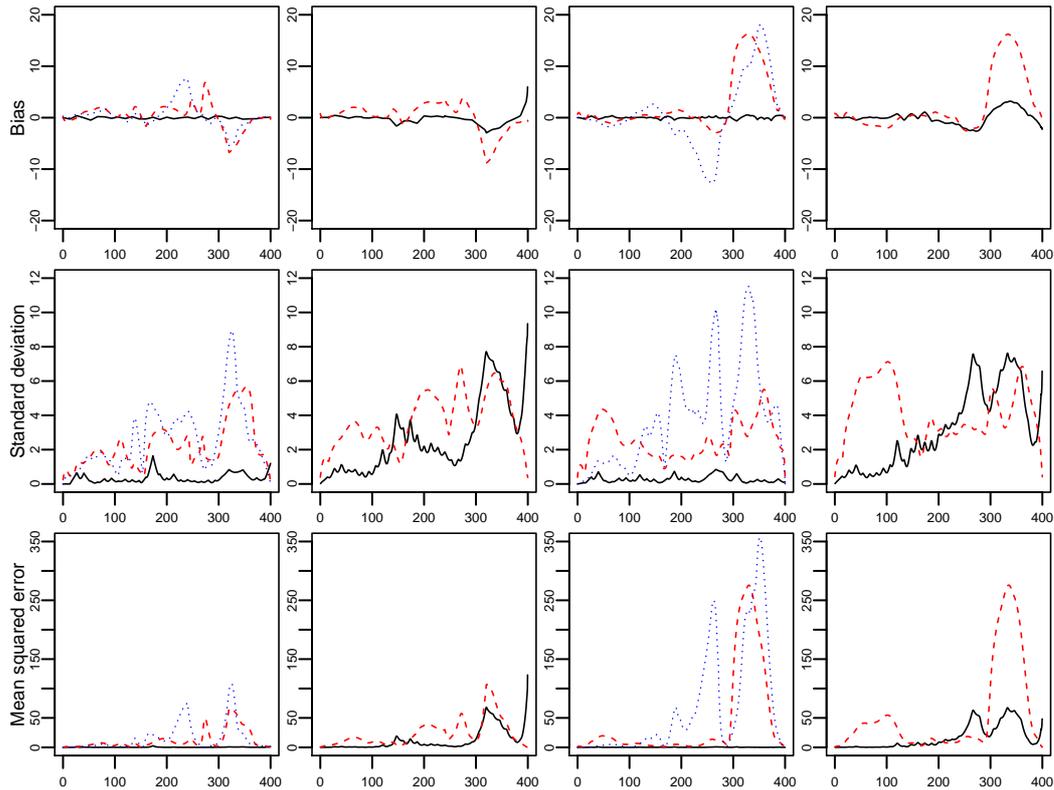}
                    \caption{Simulated point-wise bias, standard deviation, and mean squared error of the estimators of $g_0$ by continuous monotone registration (dashes), self-modelling registration (dots) and kernel-matched registration (solid lines) under Scenarios~1 (first column), 2~(second), 3~(third) and 4~(last column)}\label{fig:bias_sd_mse}
                \end{center}
            \end{figure}

            The relatively large standard deviation of the proposed estimator in Scenarios~2 and 4 may have resulted from the grid search having missed the global maximum of the criterion $L_n(g)$. In order to investigate this possibility, we computed the functional $L_n(g)$ at the correct value $g_0$ of the warping function. Figure~\ref{fig:cri_curves} shows the scatter plot of $L_n(g_0)$ vs.\ $L_n(\hat{g}_n)$ across the 1000 simulation runs under the four scenarios. The points above the diagonal correspond to the cases where $L_n(g_0)$ is larger than $L_n(\hat{g}_n)$, i.e.,where the global maximum is surely missed. Such occurrence are found to occur more in Scenarios~2 and 4, where the time points are chosen randomly. It transpires that a better search algorithm (e.g., simulated annealing) might reduce the standard deviation of the proposed estimator.

            \begin{figure}
                \begin{center}
                    \includegraphics{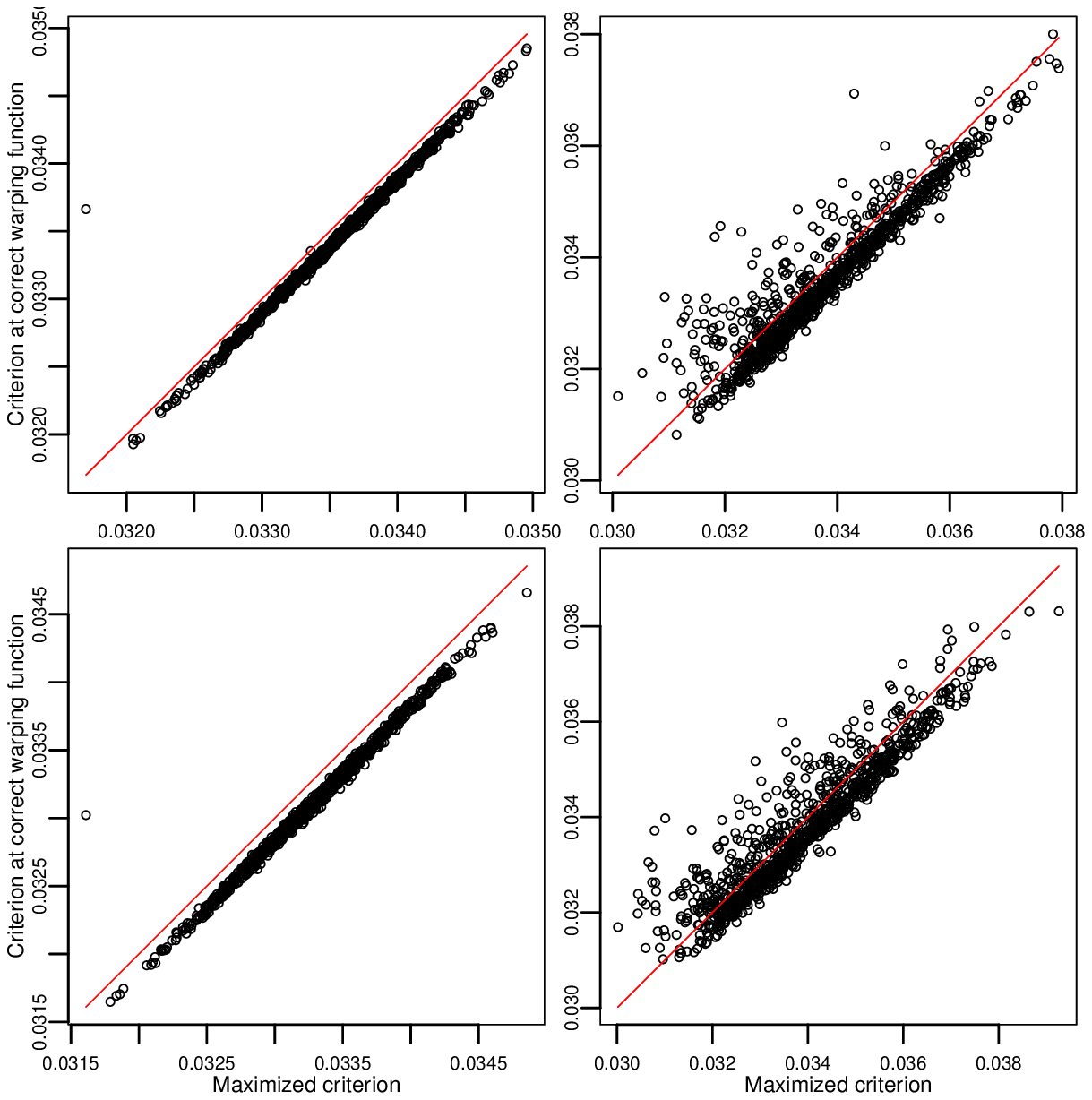}
                    \caption{Scatter-plot of criterion value at the correct warping function verses ``maximized'' value of the criterion from 1000 simulation runs under Scenarios~1 (top-left), 2~(top right), 3~(bottom-left) and 4~(bottom right)}\label{fig:cri_curves}
                \end{center}
            \end{figure}

    \subsection{Appropriateness of standard error}

    We ran an additional set of simulations to check the computation of standard error through model based bootstrap, described in Section~\ref{sec:se_g_est}. After generating data under Scenario~4, we computed, for a given simulation run, a bootstrap estimate of the standard deviation of $\hat{g}_n$ with the following choices in \eqref{eq:model}: (i)~the warping function~$g_0$ is replaced by $\hat{g}_n$, (ii)~the function $m$ is replaced by the estimator
    \begin{equation*}
        m_{n,\hat{g}_n}(t)=m_{n,g}(t)\big|_{g=\hat{g}_n},
    \end{equation*}
    where
    \begin{equation*}
        \displaystyle m_{n,g}(t)=
        \frac
        {
        \displaystyle\frac{1}{nh_n}\left\{\displaystyle\sum_{i=1}^{n_1}K\left(\frac{t-t_i}{h_n}\right) y_{1i}+\displaystyle\sum_{j=1}^{n_2}K\left(\frac{t-g(s_j)}{h_n}\right)y_{2j}\right\}
        }
        {
        \displaystyle\frac{1}{nh_n}\left\{\displaystyle \sum_{i=1}^{n_1}K\left(\frac{t-t_i}{h_n}\right) +\displaystyle\sum_{j=1}^{n_2}K\left(\frac{t-g(s_j)}{h_n}\right)\right\}
        }
    \end{equation*}
    with Gaussian kernel chosen for $K$ and bandwidth chosen by leave-one-out cross-validation,  (iii)~the densities $f_1$ and $f_2$ are replaced by their kernel density estimates based on the observed data and (iv)~the densities $f_{\epsilon_1}$ and $f_{\epsilon_2}$ are replaced by their kernel density estimates based on the residuals in the two samples. For the four density estimates, we used the Gaussian kernel and chose the bandwidth by likelihood cross-validation. One thousand bootstrap samples were generated from a given simulation run. Each bootstrap pair of samples was of size 250 (same as the original data). The sample standard deviation of the kernel matched registration estimator computed from the 1000 re-samples from a single simulation run should be comparable with the sample standard deviation of that estimator computed from 1000 runs, plotted in the middle right plot of Figure~\ref{fig:bias_sd_mse}. Figure~\ref{fig:btstrp_se_g_est} shows the plots of these bootstrap estimators from five different simulation runs (all of them as dashed curves), with the sample standard deviation obtained from 1000 simulation runs (solid curve) used as benchmark. It is seen that the bootstrap estimates are reasonable.

    \begin{figure}
        \begin{center}
            \includegraphics{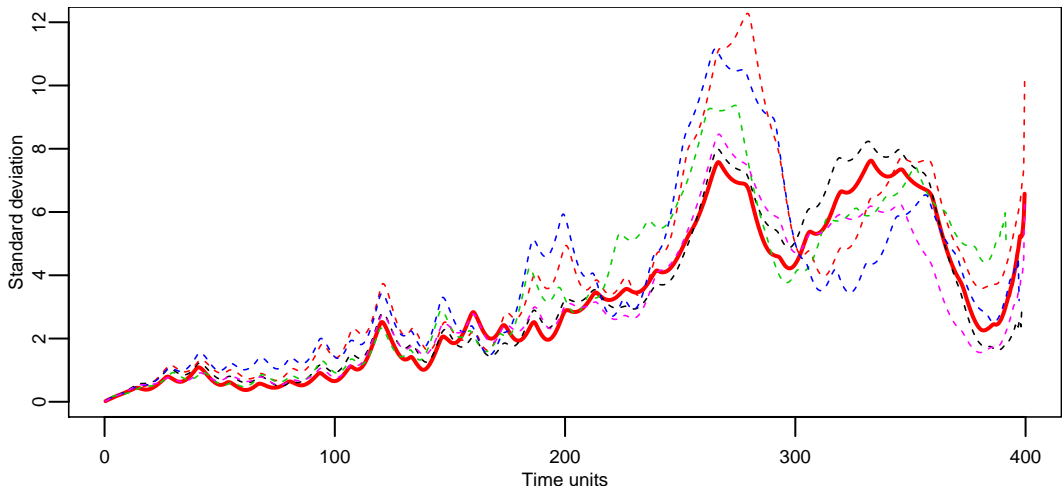}
            \caption{Bootstrap estimates of standard errors of the estimated warping function produced by kernel-matched registration from five different simulation runs (dashes), with the empirical standard deviation obtained from 1000 simulation runs (thick solid line)}\label{fig:btstrp_se_g_est}
        \end{center}
    \end{figure}

   \section{Analysis of ice core data}\label{sec:dat_ana}

        We considered paleoclimatic data on the atmospheric concentration of (i) carbon dioxide and (ii) methane \citep{Petit_et_al_1999, Loulergue_et_al_2008} as determined from air-bubbles trapped in ice cores collected over Lake Vostok and at EPICA Dome in Antarctica and (iii)~average annual temperature deviations \citep{Petit_et_al_1999, Valerie_2007}, which were reconstructed from deuterium contents at various depths of ice cores obtained at these two sites. Table~\ref{tab:des_stat} gives an idea of the volume and magnitude of the variables of these data sets.
        \begin{table}
            \caption{\label{tab:des_stat}Some descriptive statistics of the data sets}
            \centering
            \fbox{%
                \begin{tabular}{lrcrc}
                     & \multicolumn{2}{c}{Data set 1: Vostok} & \multicolumn{2}{c}{Data set 2: EPICA Dome} \\
                     Data&Size&Range($Y$-Value)&Size&Range($Y$-Value)\\
                     \hline
                     Carbon dioxide&283&182.2-298.7&537&183.8-298.6\\
                     Methane&457&318-773&1,545&342-907\\
                     Temp. deviations&3,310&(-)9.39-3.23&5,028&(-)10.58-5.46\\
                \end{tabular}}
        \end{table}

        We chose to align the data set from Lake Vostok with that from EPICA Dome C by using two registration methods namely, the proposed kernel-matched registration and continuous monotone registration. Parameters of these methods for data analyses were chosen as in the previous section. We could not apply the method of self-modelling registration for the above data sets, since it requires the nominal observation times in the two data sets to coincide.

        Alignments of the three pairs of data sets, produced by the two methods, are plotted in Figures~\ref{fig:co2_aligned}--\ref{fig:temp_aligned} and  the estimates of $g_0$ are plotted in Figure~\ref{fig:est_g}. Kernel-matched registration is found to produce visibly better alignment.
        \begin{figure}
            \begin{center}
                \includegraphics{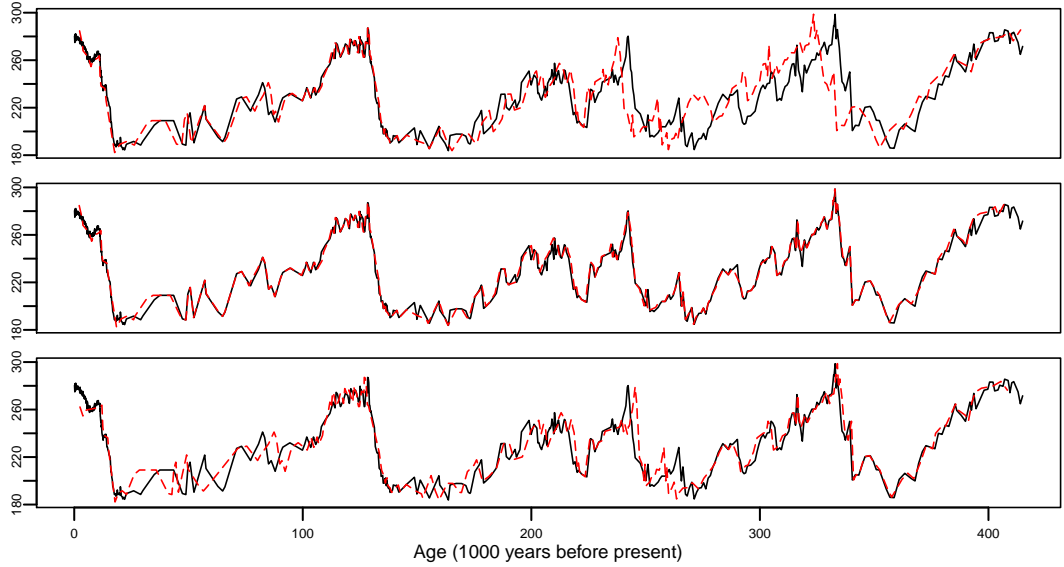}
                \caption{Atmospheric concentration of carbon dioxide (in parts per million volume) constructed from EPICA Dome C (solid lines) and Lake Vostok (dashes) ice cores (top); alignment produced by the kernel-matched registration (middle) and continuous monotone registration (bottom)}\label{fig:co2_aligned}
            \end{center}
        \end{figure}

        \begin{figure}
            \begin{center}
                \includegraphics{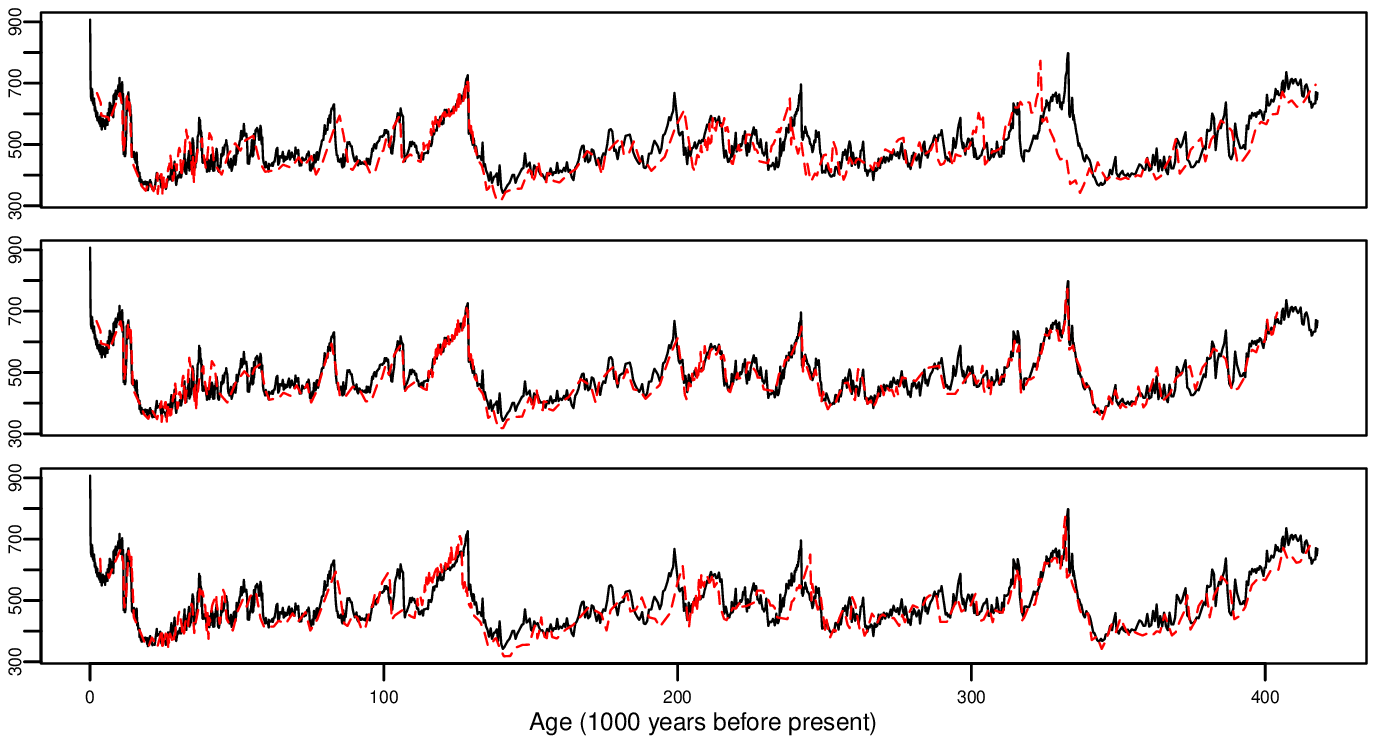}
                \caption{Atmospheric concentration of methane (in parts per billion by volume) constructed from EPICA Dome C (solid lines) and Lake Vostok (dashes) ice cores (top); alignment produced by the kernel-matched registration (middle) and continuous monotone registration (bottom)}\label{fig:ch4_aligned}
            \end{center}
        \end{figure}

        \begin{figure}
            \begin{center}
                \includegraphics{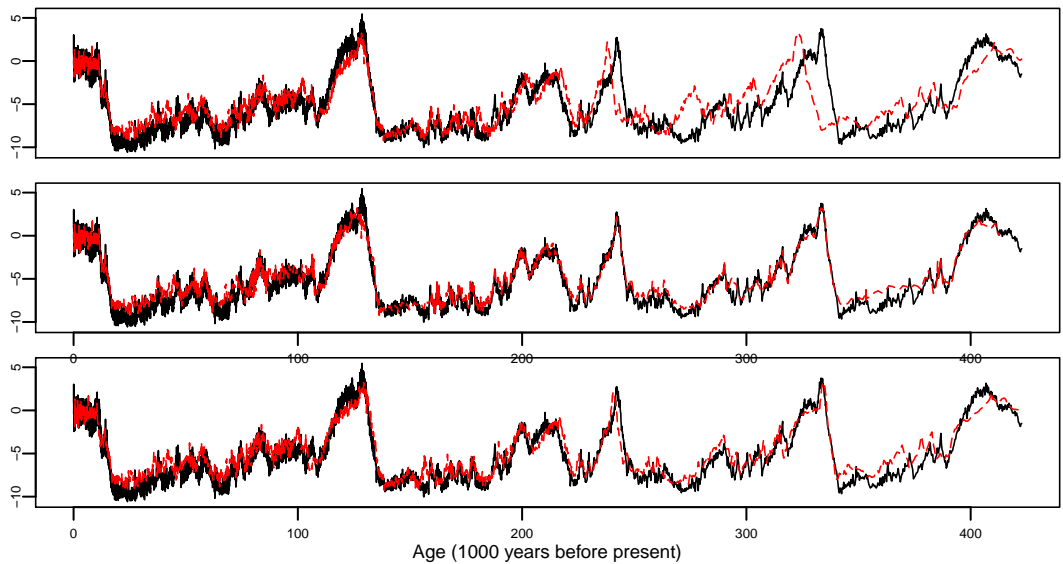}
                \caption{Average annual temperature deviations (in degrees celcius) constructed from EPICA Dome C (solid lines) and Lake Vostok (dashes) ice cores (top); alignment produced by the kernel-matched registration (middle) and continuous monotone registration (bottom)}\label{fig:temp_aligned}
            \end{center}
        \end{figure}

        \begin{figure}
            \begin{center}
                \includegraphics{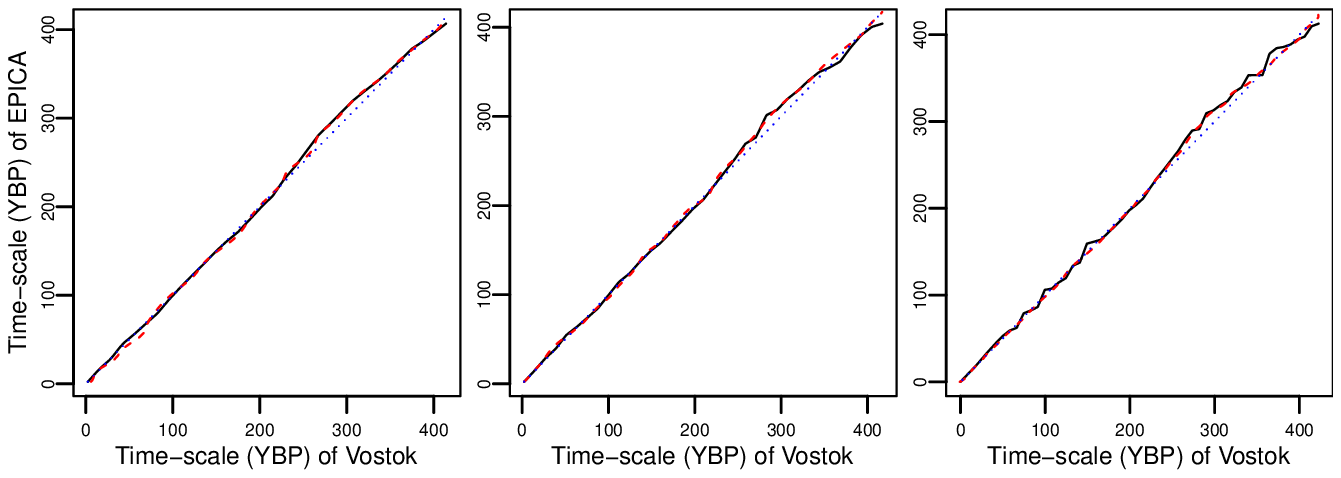}
                \caption{Estimates of $g_0$ for carbon dioxide (left), methane (middle), and temperature (right) data by kernel-matched registration (solid lines) and continuous monotone registration (dashes) against identity (dot)}\label{fig:est_g}
            \end{center}
        \end{figure}

        The average squared distance between the linearly interpolated data sets before and after registration by the two methods, computed over a grid of size 1000 over the time range, are reported in Table~\ref{tab:sq_diff}. It is observed that the kernel-matched registration method produced shorter distance between the registered curves as compared to continuous monotone registration.
        \begin{table}
            \caption{\label{tab:sq_diff}Average squared difference between the pairs of data sets}
            \centering
            \begin{tabular}{lrrr}
                \hline
                &Carbon dioxide&Methane&Temp.\ dev.\ \\
                \hline
                Pre-alignment &229.4&3,859.1&4.7\\
                Post-alignment:&&&\\
                \hspace{0.2in}Continuous monotone registration&99.9&2,147.7&1.8\\
                \hspace{0.2in}Kernel-matched registration&12.7&991.6&1.0\\
                \hline
            \end{tabular}
        \end{table}

    \section{Discussion}\label{sec:dis}

        In this paper, we have proposed a new method of registration of one functional data set with another, by optimizing an empirical kernel-based measure of alignment. If there are sharp features in the data, the proposed method is able to utilize them, without requiring prior identification of landmarks. Since the method does not use any pre-smoothing, it does not suffer from any loss of information that might occur due to smoothing. However, the measure of alignment \eqref{eq:Ln} ensures that the proposed method makes use of the main strength of smoothing, namely pooling of information from neighboring observations. It also makes use of the main strength of marker registration, by rewarding match of sharp features in the two sets of data.

        The present implementation of the method, in the form of R codes, is available from the first author on request. This implementation permits registration of data sets with possibly unequal, irregularly spaced and large number of samples. This implementation is based on some specific choices, e.g., use of the class of B-splines with uniformly spaced knot points as candidate time transformation functions, and steepest ascent for optimization. However, none of these choices is necessary in the general set-up used for proving the consistency of the proposed class of estimators.

        There are indeed some limitations of kernel-matched registration. For very noisy data sets, a sharp peak in one data set may not have a matching peak in the other one. Such spurious peaks might confuse this method. The proposed method would also be unsuitable for longitudinal growth data with many individuals but relatively fewer observations per individual.

        The alignment provided by the proposed method is likely to change if the data sets for registration are interchanged. One can choose the solution that produces smaller average of squared difference between the pair of registered and linearly interpolated data sets. When there are more than two data sets to be registered, the proposed method has to be used multiple times on pairs of data, possibly after identifying one of the data sets as reference for registration. This reference data set may also be selected on a trial basis, and the candidate leading to the best overall alignment may be selected as reference data set at the end. Alternatively, the two-step approach suggested by \citet{Tang_Muller_2008} may be used.

        The proposed method can be used as a tool for estimating the function $m$ in model~\eqref{eq:model}. Large sample properties of the resulting estimator of $m$ and its performance in relation to competing methods are studied in~\citet{Bhaumik_Sengupta_2017}.

        Registration of two sets of functional data on different variables (e.g., paleoclimatic data on temperature and carbon dioxide) is sometimes needed for the purpose of studying the relationship between them. The method presented here can be made applicable to this problem, provided the two variables are approximately related through a linear transformation. This problem is being studied by the authors.

    \appendix

    \section{Appendix: Proofs of theoretical results}

    \mbox{}\indent{\sc Proof of Theorem~\ref{thm:identifiability}.}
        Let $g\in\G$ be such that $g(\alpha)= g_0(\alpha)$ for some $\alpha\in S_{g_0}\cap S_g$, $m(g(s)) = m(g_0(s))$ for all $s\in S_{g_0}\cap S_g$, and yet $g(s)\ne g_0(s)$ for some $s\in S_{g_0}\cap S_g$.  We can presume, without loss of generality, that $g(s)<g_0(s)$. Let us assume, for now, $\alpha<s$. Then the set $\{t:\ m(t)=m(g_0(s))\}\cap[g_0(\alpha), g_0(s)]$ has at least two elements, $g(s)$ and $g_0(s)$. Let $t^{(1)}<\cdots<t^{(k)}$ be the ordered elements of this set. Clearly, $g_0(s)=t^{(k)}$. Let $g(s)=t^{(i)}$ for some $i<k$.

        In order that the functions $m(g(u))-m(g(s))$ and $m(g_0(u))-m(g_0(s))$ coincide for all $u\in[\alpha,s]$, these functions should have exactly the same number of zero crossings over this interval. However, from what we have already observed, the first function has exactly $i$ zero-crossings, while the second function has exactly $k$ zero-crossings, and $i<k$. Therefore, the two functions must differ somewhere on $[\alpha,s]$.

        Similarly, if $\alpha > s$, the set, $\{t:\ m(t)=m(g(s))\}\cap [g(s), g(\alpha)]$ has at least two elements viz., $g(s)$ and $g_0(s)$. If $s^{(1)}<\cdots <s^{(l)}$ be the ordered elements of this set, then $g(s)=s^{(1)}$ and $g_0(s)=s^{(j)}$ for some $j>1$. By similar arguments, the  two functions namely $m(g(u))-m(g(s))$ and $m(g_0(u))-m(g_0(s))$, which have been presumed to coincide for all $u\in[s, \alpha]$, must differ somewhere on $[s, \alpha]$ as the first function has exactly $l$ zero-crossings, while the second has exactly $l-j+1$.

        This contradicts the presumption that $g(s)\ne g_0(s)$ for some $s\in S_g\cap S_{g_0}$. Thus, $g=g_0$ over $S_g\!\cap\!S_{g_0}$.

        The continuity of $g$ and $g_0$, together with their equality over $S_g\cap S_{g_0}$, implies that $S_g=S_{g_0}$.

\bigskip
            {\sc Proof of Theorem~\ref{thm:pw_conv}.}
            Write $L_n(g)$ and $L(g)$ in~\eqref{eq:Ln}~and~\eqref{eq:L} as
            \begin{eqnarray}\label{eq:Ln_L}
            L_n(g)=\frac{N_n(g)}{D_n(g)},\quad L(g)=\frac{N(g)}{D(g)}
            \end{eqnarray}
            where
            \begin{eqnarray}
                N_n(g)&=&\frac{1}{n_1 n_2}\displaystyle\sum\limits_{i=1}^{n_1}\displaystyle\sum\limits_{j=1}^{n_2}\frac{1}{h_t}K_1\left(\frac{{t_i} - g({s_j})}{h_t}\right)\frac{1}{h_y}K_2\left(\frac{{y_{1i}} - y_{2j}}{h_y}\right),\label{eq:Nn}\\
                D_n(g)&=&\frac{1}{n_1 n_2}\displaystyle\sum\limits_{i=1}^{n_1}\displaystyle\sum\limits_{j=1}^{n_2}\frac{1}{h_t}K_1\left(\frac{{t_i} - g(s_j)}{h_t}\right),\label{eq:Dn}\\
                N(g)&=&\intinfinf\intinfinf f_1(g(y))f_2(y)f_{\epsilon_1}(v-m(g(y))+m(\g(y)))f_{\epsilon_2}(v) dydv,\notag\\
                \label{eq:N}\\
                D(g)&=&\intinfinf f_1(g(y))f_2(y)dy.\label{eq:D}
            \end{eqnarray}
            We shall show that $N_n(g)$ tends to $N(g)$ and $D_n(g)$ tends to $D(g)$ in probability as $n\rightarrow\infty$, by showing that $E(N_n(g))$ and $E(D_n(g))$ tend to $N(g)$ and $D(g)$, respectively, and $var(N_n(g))$ and $var(D_n(g))$ tend to zero as $n\rightarrow\infty$. The fact $D(g)>0$ follows from Assumption~A3 and the condition that $g([c, d])\cap [a, b]$ includes a non-empty open interval.

            From~\eqref{eq:Nn}, by~\eqref{eq:model} and Assumption~A3 we have
            \begin{eqnarray*}
                &&\hskip-20pt E(\Nn{g})\\
                &=&\intinfinf\intinfinf\intinfinf\intinfinf\frac{1}{h_t}K_1\left(\frac{x-g(y)}{h_t}\right) \frac{1}{h_y}K_2\left(\frac{m(x)-m(\g(y))+u-v}{h_y}\right)\\
                &\times& f_1(x)f_2(y) f_{\epsilon_1}(u)f_{\epsilon_2}(v)dxdydudv\\
                &=&\int_{-\infty}^{\infty}\int_{-\infty}^{\infty}\int_{-\infty}^{\infty}\int_{-\infty}^{\infty} K_1\left(z_1\right)K_2\left(z_2\right)f_1(g(y)+z_1h_t)\\
                &\times& f_2(y)f_{\epsilon_1}(v-m(g(y)+z_1h_t)+m(\g(y))+z_2h_y)f_{\epsilon_2}(v)dz_1dz_2dydv.
            \end{eqnarray*}
            The above integrand is bounded by an integrable function namely $$M_f^2K_1(z_1)K_2(z_2)f_2(y)f_{\epsilon_2}(v),$$ where the real number $M_f$ is the common upper bound of $f_1$, $f_2$, $f_{\epsilon_1}$ and $f_{\epsilon_2}$ (see Assumption~A3). By Assumptions~A1*, A3 and A5, for a fixed $(z_1,z_2,y,v)$, the integrand tends to $K_1(z_1)K_2(z_2)f_1(g(y))f_2(y)f_{\epsilon_1}(v-m(g(y))+m(g_0(y)))f_{\epsilon_2}(v)$ as $n\rightarrow\infty$. Consequently, by dominating convergence theorem and Assumptions~A1*, A3, A4 and~A5,
            \begin{eqnarray}\label{eq:thm:pw_conv:S1:Nng_conv}
                \lim_{n\rightarrow\infty}E(N_n(g))=N(g),
            \end{eqnarray}
            where $N(g)$ is as in~\eqref{eq:N}.

            \medskip
            From~\eqref{eq:Dn}, by~\eqref{eq:model} and Assumption~A3, we have
            \begin{eqnarray}
                E(\Dn{g})&=&\int_{-\infty}^\infty\int_{-\infty}^{\infty}I_{\left(\frac{-g(y)}{h_t},\infty\right)}(z)K_1(z)f_1(g(y)+zh_t)f_2(y)dzdy.\notag
            \end{eqnarray}
            By a similar argument as above, by Assumptions~A3, A4 and A5, we have
            \begin{eqnarray}\label{eq:thm:pw_conv:S2:Dng_conv}
                \lim_{n\rightarrow\infty}E(\Dn{g})=D(g)
            \end{eqnarray}
            where $D(g)$ is as in~\eqref{eq:D}.

            \medskip
            Write $var(N_n(g))=v_1(n)+v_2(n)+v_3(n)+v_4(n)$, where
            \begin{eqnarray}
                v_1(n)&=&\frac{1}{(n_1n_2h_th_y)^2}\sum_{i=1}^{n_1}\sum_{j=1}^{n_2}\!\!var\!\!\left\{\!\!K_1\!\!\left(\frac{t_i- g(s_j)}{h_t}\right)
                \!\!K_2\!\!\left(\frac{y_{1i}-y_{2j}}{h_y}\right)\right\},\label{eq:thm:pw_conv:S3:v1n}\\
                v_2(n)&=&\frac{1}{(n_1n_2h_th_y)^2}\sum_{i=1}^{n_1}\sum_{j=1}^{n_2}\sum^{n_1}_{
                \substack{i'=1 \\ (\neq i)}}cov\left\{K_1\left(\frac{t_i- g(s_j)}{h_t}\right)K_2\left(\frac{y_{1i}-y_{2j}}{h_y}\right)\right.,\notag\\
                &&\qquad \left.K_1\left(\frac{t_{i'}-g(s_{j})}{h_t}\right)K_2\left(\frac{y_{1i'}-y_{2j}}{h_y}\right)\right\},\label{eq:thm:pw_conv:S4:v2n}\\
                v_3(n)&=&\frac{1}{(n_1n_2h_th_y)^2}\sum_{i=1}^{n_1}\sum_{j=1}^{n_2}\sum^{n_2}_{\substack{
                j'=1\\ (\neq j)}} cov\left\{K_1\left(\frac{t_i-g(s_j)}{h_t}\right)K_2\left(\frac{y_{1i}-y_{2j}}{h_y}\right)\right.,\notag\\
                &&\qquad \left.K_1\left(\frac{t_{i}- g(s_{j'})}{h_t}\right)
                K_2\left(\frac{y_{1i}-y_{2j'}}{h_y}\right)\right\},\label{eq:thm:pw_conv:S5:v3n}\\
                v_4(n)&=&\frac{1}{(n_1n_2h_th_y)^2}\sum_{i=1}^{n_1}\sum_{j=1}^{n_2}\sum^{n_1}_{\substack{
                i'=1 \\ (\neq i)}}\sum^{n_2}_{\substack{j'=1\\ (\neq
                j)}} cov\left\{K_1\left(\frac{t_i- g(s_j)}{h_t}\right)
                K_2\left(\frac{y_{1i}-y_{2j}}{h_y}\right)\right.\notag\\
                &&\qquad \left.K_1\left(\frac{t_{i'}- g(s_{j'})}{h_t}\right)
                K_2\left(\frac{y_{1i'}-y_{2j'}}{h_y}\right)\right\}\notag.
            \end{eqnarray}
            From the model specifications, it follows that $v_4(n)=0$. We shall show that each of the other $v_i(n)$'s tends to zero as $n\rightarrow\infty$.

            From~\eqref{eq:thm:pw_conv:S3:v1n}, by~\eqref{eq:model} and Assumption~A3 we have
            $$v_1(n)=\frac{1}{n_1n_2h_th_y}v_{11}(n)-\frac{1}{n_1n_2}E^2(N_n(g))$$ where
            \begin{eqnarray}
                v_{11}(n)&=&\int_{-\infty}^{\infty}\int_{-\infty}^{\infty}\int_{-\infty}^{\infty}\int_{-\infty}^{\infty} K_1^2(z_1)K_2^2(z_2)f_1(g(y)+z_1h_t)\nonumber\\
                &\times&f_2(y)f_{\epsilon_1}(v-m(g(y)+z_1h_t)+m(\g(y))+z_2h_y)f_{\epsilon_2}(v)dz_1dz_2dydv.\notag
            \end{eqnarray}
            By a similar argument as at the beginning of this proof, by Assumptions~A1*, A3, A4 and~A5 we have
            \begin{eqnarray}
                \lim_{n\rightarrow\infty}n_1n_2h_th_yv_{1}(n)&=&\int_{-\infty}^{\infty}K_1^2(z_1)dz_1\int_{-\infty}^{\infty} K_2^2(z_2)dz_2\notag\\
                &&\hskip-45pt \times\int_{ -\infty}^{\infty}\int_{-\infty}^{\infty}f_1(g(y))f_2(y)f_{\epsilon_1}(v-m(g(y))+m(\g(y)))f_{\epsilon_2}(v) dydv.\notag
            \end{eqnarray}
            By Assumption~A5, we write $$v_1(n)=O\left(\frac{1}{n^2h_th_y}\right).$$
            \noindent
            Likewise, from~\eqref{eq:thm:pw_conv:S4:v2n}--\eqref{eq:thm:pw_conv:S5:v3n}, by~\eqref{eq:model} and Assumptions~A1*, A3, A4 and~A5, we have
            \begin{eqnarray*}
                &&\hskip -20pt\lim_{n\rightarrow\infty}n_2v_{2}(n)\\
                &=&\int_{-\infty}^\infty\int_{-\infty}^\infty f_1^2(g(y))f_2(y) f_{\epsilon_1}^2(v-m(g(y))+m(g_0(y)))f_{\epsilon_2}(v)dvdy -N^2(g),\\
                &&\hskip -20pt\lim_{n\rightarrow\infty}n_1v_3(n)\\
                &=&\int_{-\infty}^\infty\int_{-\infty}^\infty\frac{f_2^2(y)}{g'(y)}f_1(g(y)) f_{\epsilon_1}(v-m(g(y))+m(g_0(y)))f_{\epsilon_2}^2(v) dvdy - N^2(g),
            \end{eqnarray*}
            and by Assumption~A5 for $g\in\G$, we obtain $$v_2(n)=O\left(\frac{1}{n}\right),\quad v_3(n)=O\left(\frac{1}{n}\right).$$

            \medskip
            Write $var(\Dn{g})$ as $v_1^\ast(n)+v_2^\ast(n)+v_3^\ast(n)+v_4^\ast(n)$, where
            \begin{eqnarray}
                v_1^\ast(n)&=&\frac{1}{(n_1n_2h_t)^2}\sum_{i=1}^{n_1}\sum_{j=1}^{n_2}var\left\{K_1\left(\frac {t_i-g(s_j)}{h_t}\right)\right\},\label{eq:thm:pw_conv:S6:v1n}\\
                v_2^\ast(n)&=&\frac{1}{(n_1n_2h_t)^2}\!\!\sum_{i=1}^{n_1}\sum_{j=1}^{n_2}\sum^{n_1}_{\substack {i'=1\\ (\neq i)}} \!\!cov\left\{\!\!K_1\!\!\left(\frac{t_i-g(s_j)}{h_t}\right),K_1\!\!\left(\frac{t_{i'}-g(s_j)} {h_t}\right)\right\},\label{eq:thm:pw_conv:S7:v2n}\\
                v_3^\ast(n)&=&\frac{1}{(n_1n_2h_t)^2}\!\!\sum_{i=1}^{n_1}\sum_{j=1}^{n_2}\sum^{n_2}_{\substack {j'=1\\ (\neq j)}} cov\left\{\!\!K_1\!\!\left(\frac{t_i-g(s_j)}{h_t}\right),K_1\!\!\left(\frac{t_i-g(s_{j'})}{h_t}\right)\right\},\label{eq:thm:pw_conv:S8:v3n}\\
                v_4^\ast(n)&=&\frac{1}{(n_1n_2h_t)^2}\sum_{i=1}^{n_1}\sum_{j=1}^{n_2}\sum_{\substack{i'=1\\(\neq i)}}^{n_1}\sum_{\substack{j'=1\\(\neq j)}}^{n_2}
                cov\left\{K_1\left(\frac{t_i-g(s_j)}{h_t}\right),K_1\left(\frac{t_{i'}-g(s_{j'})}{h_t}\right)\right\}.\notag
            \end{eqnarray}
            From the model specifications, $v_4^\ast(n)=0$. We now show that all the other $v_i^\ast(n)$'s tend to zero as $n\rightarrow\infty$.

            From~\eqref{eq:thm:pw_conv:S6:v1n}, by \eqref{eq:model} and Assumption~A3, we have
            $$v_1^\ast(n)=\frac{1}{n_1n_2h_t}v_{11}^\ast(n)-\frac{1}{n_1n_2}E^2(D_n(g)),$$ where
            \begin{eqnarray}
                v_{11}^\ast(n)=\int_{-\infty}^\infty\int_{-\infty}^\infty I_{\left(\frac{-g(y)}{h_t},\infty\right)}(z)K_1^2(z)f_1(g(y)+h_tz)f_2(y)dzdy.\notag
            \end{eqnarray}
            Arguing in a similar manner as in the case of $var(v_1(n))$ above, by Assumptions~A3, A4 and~A5, we have
            \begin{eqnarray}
                \lim_{n\rightarrow\infty}n_1n_2h_tv_{1}^\ast(n)=\int_{-\infty}^\infty K_1^2(z)dz\int_{-\infty}^\infty f_1(g(y))f_2(y)dy\notag
            \end{eqnarray}
            and consequently, by Assumption~A5, $$v_1^\ast(n)=O\left(\frac{1}{n^2h_t}\right).$$

            Likewise, by~\eqref{eq:model} and Assumptions~A3, A4 and~A5, we obtain from~\eqref{eq:thm:pw_conv:S7:v2n}--\eqref{eq:thm:pw_conv:S8:v3n},
            \begin{eqnarray}
                \lim_{n\rightarrow\infty}n_2v_{2}^\ast(n)=\int_{-\infty}^\infty f_1^2(g(y)) f_2(y)dy-D^2(g),\notag\\
                \lim_{n\rightarrow\infty}n_1v_{3}^\ast(n)=\int_{-\infty}^\infty\frac{f_1(g(y))f_2^2(y)}{g'(y)}dy-D^2(g)\notag
            \end{eqnarray}
            and by Assumption~A5 for $g\in\G$, $$v_2^\ast(n)=O\left(\frac{1}{n}\right),\quad v_3^\ast(n)=O\left(\frac{1}{n}\right).$$
            Thus, we have shown that all the $v_i^\ast(n)$'s tend to zero as $n\rightarrow\infty$.

            The theorem then follows by the continuous mapping theorem of convergence in probability.

\bigskip
            {\sc Proof of Theorem \ref{thm:uniqueness}.}
            Observe from \eqref{eq:L} and Assumption~A3* that
            \begin{eqnarray*}
                L(g)=\frac{\intinfinf f_1(g(y))f_2(y)f_{\epsilon_1+\epsilon_2}(m(g(y))-m(g_0(y))) dy}{\intinfinf f_1(g(y))f_2(y)dy}
            \end{eqnarray*}
            where $f_{\epsilon_1+\epsilon_2}$ is the convolution of the densities $f_{\epsilon_1}$ and $f_{\epsilon_2}$.
            In particular, $L(g_0)=f_{\epsilon_1+\epsilon_2}(0)$. Thus,
            \begin{eqnarray*}
                L(g_0)-L(g)=\frac{\intinfinf f_1(g(x))f_2(x)\left[f_{\epsilon_1+\epsilon_2}(0)-f_{\epsilon_1+\epsilon_2}(m(g(x))-m(g_0(x)))\right] dx}{\intinfinf f_1(g(x))f_2(x)dy}.
            \end{eqnarray*}
            By Lemma~\ref{lem:unimodality}, stated and proved below, $[f_{\epsilon_1+\epsilon_2}(0)-f_{\epsilon_1+\epsilon_2}(m(g(x))-m(g_0(x)))]\ge0$ for all $x$, which
            proves part~(a).

            In order that the last expression for $L(g_0)-L(g)$ happens to be zero for some $g\in\G$, the above difference must be equal to zero for all $x$ such that $f_1(g(x))f_2(x)>0$, i.e., for $x\in S_g\cap S_{g_0}$, where $S_g$ is as defined at the beginning of Section~\ref{sec:con}. Since $f_{\epsilon_1+\epsilon_2}$ is strictly unimodal at 0, this requirement reduces to $m(g(x))=m(g_0(x))$ for all $x\in S_g\cap S_{g_0}.$ It follows from Theorem~\ref{thm:identifiability} that $S_g=S_{g_0}$ and $g(x)= g_0(x)$ for all $x\in S_{g_0}$, which establishes part~(b).

        \begin{lemma}\label{lem:unimodality}
            Let $f_{\epsilon_1+\epsilon_2}$ be the convolution of the densities $f_{\epsilon_1}$ and $f_{\epsilon_2}$. Then under Assumption~A3*, $f_{\epsilon_1+\epsilon_2}$ is strictly unimodal at zero.
        \end{lemma}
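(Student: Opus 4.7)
The plan is to show, for every $z\neq 0$, that $f_{\epsilon_1+\epsilon_2}(0)>f_{\epsilon_1+\epsilon_2}(z)$. Since $f_{\epsilon_1}$ and $f_{\epsilon_2}$ are each symmetric about zero, so is $f_{\epsilon_1+\epsilon_2}$, and I may restrict attention to $z>0$. Starting from the convolution
\begin{equation*}
f_{\epsilon_1+\epsilon_2}(z)=\intinfinf f_{\epsilon_1}(u)f_{\epsilon_2}(z-u)\,du
\end{equation*}
and using $f_{\epsilon_2}(-u)=f_{\epsilon_2}(u)$ in the $z=0$ integrand, I would write
\begin{equation*}
f_{\epsilon_1+\epsilon_2}(0)-f_{\epsilon_1+\epsilon_2}(z)=\intinfinf f_{\epsilon_1}(u)\bigl[f_{\epsilon_2}(u)-f_{\epsilon_2}(z-u)\bigr]\,du.
\end{equation*}

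The key step is a symmetrization about the point $u=z/2$. Applying the substitution $u\mapsto z-u$ to the right-hand side yields a second expression for the same difference in which $f_{\epsilon_1}(u)$ is replaced by $f_{\epsilon_1}(z-u)$ and the bracketed factor picks up an overall minus sign. Averaging the two identities collapses them into
\begin{equation*}
2\bigl[f_{\epsilon_1+\epsilon_2}(0)-f_{\epsilon_1+\epsilon_2}(z)\bigr]=\intinfinf\bigl[f_{\epsilon_1}(u)-f_{\epsilon_1}(z-u)\bigr]\bigl[f_{\epsilon_2}(u)-f_{\epsilon_2}(z-u)\bigr]\,du,
\end{equation*}
so the task reduces to showing that this integrand is pointwise nonnegative and strictly positive on a set of positive Lebesgue measure.

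The sign analysis is driven by comparing $|u|$ with $|z-u|$ for $z>0$: a short case check (splitting $u>z/2$ into $z/2<u<z$, $u=z$, $u>z$, and similarly for $u<z/2$) gives $|u|>|z-u|$ when $u>z/2$, $|u|<|z-u|$ when $u<z/2$, and equality only at $u=z/2$. Because each $f_{\epsilon_i}$ is symmetric and strictly unimodal at zero---hence strictly decreasing in $|\cdot|$---both differences $f_{\epsilon_i}(u)-f_{\epsilon_i}(z-u)$ are strictly negative on $\{u>z/2\}$ and strictly positive on $\{u<z/2\}$, so their product is nonnegative everywhere and strictly positive except at the single point $u=z/2$. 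The integral is therefore strictly positive, which gives the lemma. I do not anticipate a serious obstacle; the only care needed is to invoke strict unimodality in its full strength so that the comparisons above are strict inequalities and that the set of strict positivity has positive Lebesgue measure rather than being pointwise only.
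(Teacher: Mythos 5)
Your symmetrization identity and sign analysis are computationally sound, and in fact your identity is exactly the paper's own manipulation specialized to comparing with the point zero: the paper folds the integral about the midpoint $(u_2-u_1)/2$ after shifting, and your folding about $z/2$ is the case $u_1=0$, $u_2=z$. The issue is a mismatch between what you prove and what the lemma asserts. You establish that $f_{\epsilon_1+\epsilon_2}(0)>f_{\epsilon_1+\epsilon_2}(z)$ for every $z\neq 0$, i.e.\ that zero is the unique global maximizer of the convolution. But ``strictly unimodal at zero'', read in the way you yourself read it when invoking Assumption~A3* (namely that $f_{\epsilon_1}$ and $f_{\epsilon_2}$ are strictly decreasing in $|x|$), is the stronger property that the density strictly decreases as one moves away from zero on either side; a function can have a strict maximum at $0$ and still oscillate on $(0,\infty)$. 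So under the only reading of the hypothesis that makes your pointwise sign comparison work, your argument delivers a strictly weaker conclusion than the one stated (and under the weaker reading of ``strictly unimodal'' your use of the hypothesis would be unjustified). That inconsistency is the gap.

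It is easily repaired by running your computation on $f_{\epsilon_1+\epsilon_2}(u_2)-f_{\epsilon_1+\epsilon_2}(u_1)$ for arbitrary $0<u_1<u_2$ rather than only $u_1=0$, which is what the paper does: using the symmetry of $f_{\epsilon_2}$ to write the difference as $\int_{-\infty}^{\infty} f_{\epsilon_1}(u_1+v)\{f_{\epsilon_2}(u_2-u_1-v)-f_{\epsilon_2}(v)\}\,dv$, folding about $v=(u_2-u_1)/2$, and obtaining the integral of a product of two factors with opposite strict signs, hence $f_{\epsilon_1+\epsilon_2}(u_2)<f_{\epsilon_1+\epsilon_2}(u_1)$; the case $u_2<u_1<0$ follows by symmetry. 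Two further remarks: first, the weaker property you do prove is all that the proof of Theorem~\ref{thm:uniqueness} actually uses (nonnegativity of $f_{\epsilon_1+\epsilon_2}(0)-f_{\epsilon_1+\epsilon_2}(\cdot)$ with equality only at zero), so nothing downstream would break; second, within what you prove the details are in order --- the integrand is dominated by an integrable function because the densities are bounded (A3*), and strict positivity of the product off the single point $u=z/2$ indeed forces a strictly positive integral.
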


        \begin{proof}
            Fix $0<u_1<u_2$. By Assumption~A3*, we observe that
            \begin{eqnarray}\label{eq:lem:unimodality:1}
                &&f_{\epsilon_1+\epsilon_2}(u_2)-f_{\epsilon_1+\epsilon_2}(u_1)\notag\\
                &=&\int_{-\infty}^{\infty}f_{\epsilon_1}(v)\{f_{\epsilon_2}(u_2-v)-f_{\epsilon_2}(v-u_1)\}dv\notag\\
                &=&\int_{-\infty}^{\infty}f_{\epsilon_1}(u_1+v)\{f_{\epsilon_2}(u_2-u_1-v)-f_{\epsilon_2}(v)\}dv\notag\\
                &=&\int_{-\infty}^{\frac{u_2-u_1}{2}}f_{\epsilon_1}(u_1+v)\{f_{\epsilon_2}(u_2-u_1-v)-f_{\epsilon_2}(v)\}dv\notag\\ &&\qquad\qquad+\int_{\frac{u_2-u_1}{2}}^{\infty}f_{\epsilon_1}(u_1+v)\{f_{\epsilon_2}(u_2-u_1-v)-f_{\epsilon_2}(v)\}dv\notag\\
                &=&\int_{\frac{u_2-u_1}{2}}^{\infty}f_{\epsilon_1}(u_2-v)\{f_{\epsilon_2}(v)-f_{\epsilon_2}(u_2-u_1-v)\}dv\notag\\ &&\qquad\qquad+\int_{\frac{u_2-u_1}{2}}^{\infty}f_{\epsilon_1}(u_1+v)\{f_{\epsilon_2}(u_2-u_1-v)-f_{\epsilon_2}(v)\}dv\notag\\
                &=&\int_{\frac{u_2-u_1}{2}}^{\infty}\{f_{\epsilon_1}(u_2-v)-f_1(u_1+v)\}\{f_{\epsilon_2}(v)-f_{\epsilon_2}(u_2-u_1-v)\}dv\notag\\
                &=&\int_0^{\infty}\left\{f_{\epsilon_1}\left(\frac{u_2+u_1}{2}-v\right)-f_{\epsilon_1}\left(\frac{u_2+u_1}{2}+ v\right)\right\}\notag\\
                &&\qquad\qquad\times\left\{f_{\epsilon_2}\left(\frac{u_2-u_1}{2}+v\right)-f_{\epsilon_2}\left(\frac{u_2-u_1}{2}-v\right)\right\}dv\notag\\
                &<&0.\notag
            \end{eqnarray}
            By a similar argument, the same inequality holds for $u_2<u_1<0$.
        \end{proof}

        Next, we state and prove a lemma which is used to prove uniform convergence of the functional $L_n$.
        \begin{lemma}\label{lem:Ln_oscillation}
            Let the class $\G$ be as described at the beginning in this section. Then under Assumptions~A1*, A3, A4 and A5,
            \begin{enumerate}
                \item[(a)] for $g,\tilde{g}\in\G$, $$|L_n(\tilde{g})-L_n(g)|\le B_n(\tilde{g})\|g-\tilde{g}\|,$$ where
                \begin{eqnarray}
                    B_n(\tilde{g})&=&c^{-2}M_K^{'}\{\Nn{\tilde{g}}+U_n\Dn{\tilde{g}}\},\label{eq:Bn}\\ U_n&=&(n_1n_2h_y)^{-1}\sum_{i=1}^{n_1}\sum_{j=1}^{n_2}K_2\{(y_{1i}-y_{2j})/h_y\},\notag
                \end{eqnarray}
                the functionals $L_n$, $N_n$ and $D_n$ are as defined in~\eqref{eq:Ln}, \eqref{eq:Nn} and \eqref{eq:Dn} respectively and $c$ and $M_K'$ be such that $0<c\le K_i(x)$ and $|K_i'(x)|\le M_K'\qquad (i=1,2)$,
                \item[(b)] $U_n$ tends to a constant $U$ in probability as $n\rightarrow\infty$, where $$U=\int_{-\infty}^{\infty}\int_{-\infty}^{\infty}\int_{-\infty}^{\infty} f_1(x)f_2(y)f_{\epsilon_1}(v-m(x)+m(g_0(y)))f_{\epsilon_2}(v)dxdydv,$$
                \item[(c)] $B_n(\tilde{g})$ tends to $B(\tilde{g})$ in probability as $n\rightarrow\infty$, where
                \begin{equation}
                    B(\tilde{g})=c^{-2}M_K^{'}\{N(\tilde{g})+U D(\tilde{g})\}\label{eq:B},
                \end{equation}
                the functionals $N$ and $D$ being as defined in~\eqref{eq:N} and \eqref{eq:D} respectively.
            \end{enumerate}
        \end{lemma}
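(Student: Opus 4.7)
For part (a), the starting point is the algebraic identity
\[
L_n(\tilde g) - L_n(g) = \frac{N_n(\tilde g)}{D_n(\tilde g)}\cdot\frac{D_n(g)-D_n(\tilde g)}{D_n(g)} + \frac{N_n(\tilde g)-N_n(g)}{D_n(g)},
\]
obtained by adding and subtracting $N_n(\tilde g)D_n(\tilde g)$ in the numerator of the standard difference-of-ratios formula. After the triangle inequality, it suffices to bound $|D_n(g)-D_n(\tilde g)|$ and $|N_n(\tilde g)-N_n(g)|$. The mean value theorem applied to $K_1$ inside each summand, combined with $|g(s_j)-\tilde g(s_j)|\le\|g-\tilde g\|$, yields the pointwise estimate $|K_1((t_i-g(s_j))/h_t)-K_1((t_i-\tilde g(s_j))/h_t)|\le (M_K'/h_t)\|g-\tilde g\|$. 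For the $D_n$-difference I would use a ``ratio trick'': divide and multiply each summand by $K_1((t_i-\tilde g(s_j))/h_t)\ge c$ so that the Lipschitz factor becomes $(M_K'/(ch_t))\|g-\tilde g\|$ and the remaining weights reassemble into $D_n(\tilde g)$, giving $|D_n(g)-D_n(\tilde g)|\le (M_K'/(ch_t))\|g-\tilde g\|\,D_n(\tilde g)$. For the $N_n$-difference I would apply the Lipschitz bound directly while keeping the $K_2$-factors, and identify $(1/(n_1n_2h_th_y))\sum_{i,j}K_2((y_{1i}-y_{2j})/h_y)$ with $U_n/h_t$, obtaining $|N_n(\tilde g)-N_n(g)|\le (M_K'U_n/h_t^2)\|g-\tilde g\|$. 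Invoking the lower bound $D_n(g)\ge c/h_t$ (which follows from $K_1\ge c$) twice --- as $1/D_n(g)\le h_t/c$ in the first summand and as $1/h_t\le D_n(\tilde g)/c$ in the second --- the factors of $h_t$ cancel, and the two pieces add up to $(M_K'/c^2)\{N_n(\tilde g)+U_nD_n(\tilde g)\}\|g-\tilde g\|$, the advertised bound.

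For part (b), I would mirror the moment calculation in the proof of Theorem~\ref{thm:pw_conv}. The change of variable $z=(m(x)-m(g_0(y))+u-v)/h_y$ in $E(U_n)$ and the dominated convergence theorem --- with the integrable majorant $M_f K_2(z)f_1(x)f_2(y)f_{\epsilon_1}(u)$, where $M_f$ is the common upper bound on the densities furnished by A3 --- give $E(U_n)\to U$ under A1*, A3, A4 and A5. The variance of $U_n$ splits, exactly as for $N_n(g)$ in Theorem~\ref{thm:pw_conv}, into four covariance sums; the four-index piece vanishes because $(t_i,\epsilon_{1i})$ are independent across $i$ and $(s_j,\epsilon_{2j})$ are independent across $j$, while the remaining three pieces are of orders $O(1/(n^2h_y))$, $O(1/n_1)$ and $O(1/n_2)$ under A5, so that $\mathrm{Var}(U_n)\to0$. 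Chebyshev's inequality then delivers $U_n\to U$ in probability.

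Part (c) is an immediate corollary of part (b), of the probability convergences $N_n(\tilde g)\to N(\tilde g)$ and $D_n(\tilde g)\to D(\tilde g)$ established inside the proof of Theorem~\ref{thm:pw_conv}, and of the continuous mapping theorem applied to the linear combination that defines $B_n(\tilde g)$.

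The main obstacle is the book-keeping in part (a). The two summands of the triangle-inequality bound look symmetric but must be handled \emph{asymmetrically}: the $D_n$-difference needs the ratio trick to extract a factor $D_n(\tilde g)$, whereas the $N_n$-difference already carries a $U_n/h_t$ factor and hence needs the lower bound on $K_1$ to convert a residual $1/h_t$ into a second $D_n(\tilde g)/c$. Only this asymmetric treatment makes the $h_t$'s cancel correctly and produces the precise constant $c^{-2}M_K'$ together with the exact linear combination $N_n(\tilde g)+U_nD_n(\tilde g)$ in the stated bound.
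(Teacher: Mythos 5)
Your proposal is correct and takes essentially the same approach as the paper: the same mean value theorem bound on the $K_1$-differences with constant $M_K'$, the same use of the kernel lower bound $c$ (via $D_n(g)\ge c/h_t$) to cancel the powers of $h_t$, the same identification of $U_n$, the same expectation/variance computation for $U_n$ mirroring the proof of Theorem~\ref{thm:pw_conv}, and the continuous mapping theorem for part (c). The only difference is cosmetic bookkeeping in part (a) --- your asymmetric add-and-subtract identity plus ``ratio trick'' versus the paper's symmetric bound $|L_n(\tilde g)-L_n(g)|\le (h_t^2/c^2)\{D_n(\tilde g)|N_n(\tilde g)-N_n(g)|+N_n(\tilde g)|D_n(\tilde g)-D_n(g)|\}$ --- and both arrangements yield the identical constant $c^{-2}M_K'$ and the identical bound $B_n(\tilde g)\|g-\tilde g\|$.
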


        \begin{proof}
            From~\eqref{eq:Dn}, we have $D_n(g)\ge c/h_t$. Using this inequality, we have from~\eqref{eq:Ln_L}
            \begin{eqnarray}\label{eq:lem:Ln_oscillation:1}
               |L_n(\tilde{g})-L_n(g)|\leq \frac{h_t^2}{c^2}\left\{\Dn{\tilde{g}}\left|\Nn{\tilde{g}}-\Nn{g} \right|+\Nn{\tilde{g}}\left|\Dn{\tilde{g}}-\Dn{g}\right|\right\}.\quad\mbox{}
            \end{eqnarray}
            From~\eqref{eq:Nn}, we have
            \begin{eqnarray*}
                &&\hskip -30pt|\Nn{\tilde{g}}-\Nn{g}|\\
                &\leq&\frac{1}{n_1n_2h_th_y}\sum_{i=1}^{n_1}\sum_{j=1}^{n_2}\!K_{2}\!\left(\!\frac{y_{1i}-y_{2j}}{h_y}\!\right)\!\left| \!K_{1}\!\left(\!\frac{t_i-\tilde{g}(s_j)}{h_t}\!\right)-\!K_{1}\!\left(\!\frac{t_i-g(s_j)}{h_t}\!\right)\right|,
            \end{eqnarray*}
            which, by the mean value theorem, reduces to
            \begin{eqnarray}\label{eq:Lem:Ln_oscillation:2}
                |\Nn{\tilde{g}}-\Nn{g}|\leq\frac{M_K^{'}}{h_t^2}\|g-\tilde{g}\|U_n.
            \end{eqnarray}

            Similarly, from the expressions of $D_n(g)$ in~\eqref{eq:Dn} and by the mean value theorem, we have
            \begin{eqnarray}\label{eq:Lem:Ln_oscillation:3}
                |\Dn{\tilde{g}}-\Dn{g}|\leq\frac{M_K^{'}}{h_t^2}\|g-\tilde{g}\|.
            \end{eqnarray}

            Part (a) follows by combining~\eqref{eq:Lem:Ln_oscillation:2} and \eqref{eq:Lem:Ln_oscillation:3} with~\eqref{eq:lem:Ln_oscillation:1}.

            \medskip
            From the expression of $U_n$ in this Lemma, by~\eqref{eq:model} and Assumptions~A3,
            \begin{eqnarray}
                E(U_n)&=&\intinfinf\intinfinf\intinfinf\intinfinf \frac{1}{h_y}K_2\left(\frac{m(x)-m(\g(y))+u-v}{h_y}\right)\notag\\
                &&\times\quad f_1(x)f_2(y)f_{\epsilon_1}(u)f_{\epsilon_2}(v) dxdydudv\notag\\
                &=&\intinfinf\intinfinf\intinfinf\intinfinf K_2(w) f_1(x)f_2(y)\notag\\
                &&\times\quad f_{\epsilon_1}(v-m(x)+m(g_0(y))+wh_y)f_{\epsilon_2}(v)dxdydwdv.\notag
            \end{eqnarray}
            By a similar argument as in the proof of Theorem~\ref{thm:pw_conv}, we obtain, by Assumptions~A1*, A3, A4 and~A5,
            \begin{eqnarray}\label{eq:lem:Ln_oscillation:4:Un_conv}
                \lim_{n\rightarrow\infty}E(U_n)=U.
            \end{eqnarray}

            Write $var(U_n)$ as $v_1(n)+v_2(n)+v_3(n)+v_4(n)$ where
            \begin{eqnarray}
                v_1(n)&=&\frac{1}{(n_1n_2h_y)^2}\SUM{i}{1}{n_1}\SUM{j}{1}{n_2}var\left\{K_2\left(\frac{ y_{1i}-y_{2j}}{h_y}\right)\right\},\notag
            \end{eqnarray}
            \begin{eqnarray}
                v_2(n)&=&\frac{1}{(n_1n_2h_y)^2}\SUM{i}{1}{n_1}\SUM{j}{1}{n_2}\nSUM{i'}{1}{n_1}{i} cov\left\{K_2\left(\frac{y_{1i}-y_{2j}}{h_y}\right),K_2\left(\frac{y_{1i'}-y_{2j}}{h_y}\right)\right\},\notag
            \end{eqnarray}
            \begin{eqnarray}
                v_3(n)&=&\frac{1}{(n_1n_2h_y)^2}\SUM{i}{1}{n_1}\SUM{j}{1}{n_2}\nSUM{j'}{1}{n_2}{j}
                cov\left\{K_2\left(\frac{y_{1i}-y_{2j}}{h_y}\right),K_2\left(\frac{y_{1i}-y_{2j'}}{h_y}\right)\right\},\notag
            \end{eqnarray}
            \begin{eqnarray}
                v_4(n)&=&\frac{1}{(n_1n_2h_y)^2}\SUM{i}{1}{n_1}\SUM{j}{1}{n_2}\nSUM{i'}{1}{n_1}{i}
                \nSUM{j'}{1}{n_2}{j}cov\left\{K_2\left(\frac{y_{1i}-y_{2j}}{h_y}\right),K_2\left(\frac{y_{1i'}-y_{2j'}}{h_y}\right)\right\}.\notag
            \end{eqnarray}
            From the specifications of the model, it follows that $v_4(n)=0$. We now show that all the other $v_i$'s also tend to zero as $n\rightarrow\infty$.

            By a similar argument as in the proof of Theorem~\ref{thm:pw_conv}, it follows that under Assumptions~A1*, A3, A4 and~A5,
            \begin{eqnarray*}
                &&\hskip -30pt\lim_{n\rightarrow\infty}\!\!n_1n_2h_yv_1(n)\!\!\\
                &=&\!\!\intinfinf\!\!\!\!\!\! K^2_2(w)dw\!\!\! \intinfinf\!\!\intinfinf\!\!\!\!\intinfinf\!\!\!\!\!\! f_1(x)f_2(y) f_{\epsilon_1}(v-m(x)+m(g_0(y)))f_{\epsilon_2}(v)dxdydv,\\
                &&\hskip -30pt\lim_{n\rightarrow\infty}\!\!n_2v_2(n)\!\!\\
                &=&\!\!\intinfinf\!\!\intinfinf\!\!f_2(y)f_{\epsilon_2}(v)\left\{\intinfinf\!\!\!\!\!\!f_1(x) f_{\epsilon_1}(v-m(x)+m(g_0(y)))
                dx\right\}^2\!\!\!dvdy\!-\!U^2,\\
                &&\hskip -30pt\lim_{n\rightarrow\infty}\!\!n_1v_3(n)\!\!\\
                &=&\!\!\intinfinf\!\!\intinfinf\!\!f_1(x)f_{\epsilon_1}(v)\left\{\intinfinf\!\!\!\!\!\! f_2(y) f_{ \epsilon_2} (v+m(x)-m(g_0(y)))dy\right\}^2\!\!\!dvdx\!-\!U^2.
            \end{eqnarray*}
            Therefore, by Assumption~A5 we have, $$v_1(n)=O\left(\frac{1}{n^2h_y}\right),\quad v_2(n)=O\left(\frac{1}{n}\right),\quad v_3(n)=O\left(\frac{1}{n}\right).$$ This completes the proof of part (b).

            \medskip
            Part (c) follows from~\eqref{eq:thm:pw_conv:S1:Nng_conv}, \eqref{eq:thm:pw_conv:S2:Dng_conv} in supplementary materials, and~\eqref{eq:lem:Ln_oscillation:4:Un_conv}.
        \end{proof}

            {\sc Proof of Theorem~\ref{thm:unif_conv}.}
            By triangular inequality for any $g, \tilde{g}\in\G_0$, we have
            \begin{eqnarray}\label{eq:thm:unif_conv_1}
                |L_n(g)-L(g)|\leq|L(\tilde{g})-L(g)|+|L_n(g)-L_n(\tilde{g})|+|L_n(\tilde{g})-L(\tilde{g})|.
            \end{eqnarray}

            Set $\epsilon>0$. By Lemma~\ref{lem:L_unif_cont}, stated and established below, there exists $\delta_{\epsilon}>0$ such that $|L(g)-L(\tilde{g})|<\epsilon/3$ when $\|g-\tilde{g}\|<\delta_{\epsilon}$.

            By Lemma~\ref{lem:Ln_oscillation}, stated and proved above, $|L_n(\tilde{g})-L_n(g)|\le B_n(\tilde{g})\|g-\tilde{g}\|$ where as $n\rightarrow\infty$, $B_n(\tilde{g})$ goes to $B(\tilde{g})$ in probability which implies
            \begin{eqnarray}\label{eq:thm:unif_conv_2}
                \lim_{n\rightarrow\infty}pr\left\{B_n(\tilde{g})>\max\left(\frac{\epsilon}{3\delta_{\epsilon}},2B(\tilde{g})\right)\right\}= 0
            \end{eqnarray}
            where the functionals $B_n$ and $B$ are as in~\eqref{eq:Bn} and~\eqref{eq:B} respectively. For a given $\tilde{g}\in\G_0$, let $\delta(\tilde{g},\epsilon)=\min\{(6B(\tilde{g}))^{-1}\epsilon,\delta_{\epsilon}\}$ if $B(\tilde{g})>0$ or $\delta_{\epsilon}$ if $B(\tilde{g})=0$. Clearly $\{\mathcal{N}_{\delta(\tilde{g},\epsilon)}(\tilde{g}):\tilde{g}\in\G_0\}$ is an open cover of $\G_0$ where $\mathcal{N}_{\eta}(\tilde{g})=\left\{g:\|g-\tilde{g}\|<\eta\right\}$. By the compactness of $\G_0$, this cover contains a finite sub-cover, say
            $\{\mathcal{N}_{\delta(\tilde{g}_j,\epsilon)}(\tilde{g}_j)\ j=1\ldots k_\epsilon\}$ for some finite $k_\epsilon$, of $\G_0$. Therefore, from~ \eqref{eq:thm:unif_conv_1},
            \begin{eqnarray}
               &&\hskip -30pt\sup_{g\in\G_0}|L_n(g)\!-\!L(g)|\notag\\
               &\leq&\!\!\max_{j=1,\ldots,k_\epsilon}\!\!\!\left\{\!\sup_{g\in\mathcal{N}_{\delta(\tilde{g}_j,\epsilon)}(\tilde{g}_j)} \!\!\!\!\!\!\!\!\!\!\!|L(\tilde{g}_j)\!-\!L(g)|+\!\!\!\!\!\!\!\!\!\!\sup_{g\in\mathcal{N}_{\delta(\tilde{g}_j,\epsilon)}(\tilde{g}_j)} \!\!\!\!\!\!\!\!\!\!\!|L_n(g)\!-\!L_n(\tilde{g}_j)|+|L_n(\tilde{g}_j)\!-\!L(\tilde {g}_j)|\right\}\notag\\
               &\leq&\frac{\epsilon}{3}+\max_{j=1,\ldots,k_\epsilon}\delta(\tilde{g}_j,\epsilon)B_n(\tilde{g}_j)+\sum_{j=1}^{k_\epsilon} |L_n(\tilde{g} _j)-L(\tilde{g}_j)|.\notag
            \end{eqnarray}
            Consequently,
            \begin{eqnarray}
                &&\hskip -30pt pr\!\left(\sup_{g\in\G_0}|L_n(g)\!-\!L(g)|\!>\!\epsilon\!\right)\notag\\ \!\!&\leq&\!\!\sum_{j=1}^{k_{\epsilon}}\!pr\!\left(B_n(\tilde{g}_j)>\frac{\epsilon}{3\delta(\tilde{g}_j,\epsilon)} \right)\!\!+\!pr\!\left(\sum_{j=1}^{k_\epsilon}\!|L_n(\tilde{g }_j)\!-\!L(\tilde{g}_j)|>\frac{\epsilon}{3}\right)\!\!.\notag
            \end{eqnarray}
            Each summand of the first term goes to zero by \eqref{eq:thm:unif_conv_2}, while those of the second term go to zero by Theorem~\ref{thm:pw_conv}.

        \begin{lemma}\label{lem:L_unif_cont}
            Let $\G_0$ be as defined in Theorem~\ref{thm:unif_conv}. Then under Assumptions~A1* and~A3, the functional $L$ in~\eqref{eq:L} is uniformly continuous on $\G_0$.
        \end{lemma}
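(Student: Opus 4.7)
Since $\G_0$ is compact in the metric space $(\G,\Delta)$, it suffices to establish that $L$ is continuous at every $g\in\G_0$ with respect to $\Delta$; uniform continuity then follows from the standard fact that a continuous real-valued function on a compact metric space is uniformly continuous. Writing $L(g)=N(g)/D(g)$ with $N$ and $D$ as in~\eqref{eq:N} and~\eqref{eq:D}, the plan is to prove continuity of the numerator and denominator separately, show that $D$ stays uniformly bounded away from zero on $\G_0$, and then conclude continuity of the ratio.

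The first step is continuity of $D$. Suppose $g_k\to g$ in $\Delta$, i.e.\ $g_k(y)\to g(y)$ uniformly in $y\in[c,d]$. By Assumption~A3, $f_1$ is continuous and bounded (say by $M_f$) and $f_2$ has compact support $[c,d]$, so the integrand in~\eqref{eq:D} is dominated by the integrable function $M_f\,f_2(y)$ and converges pointwise at every $y$ by continuity of $f_1$. Dominated convergence yields $D(g_k)\to D(g)$. The analogous argument applies to $N(g)$: by Assumption~A1*, $m$ is continuous, and the integrand in~\eqref{eq:N} is dominated by $M_f^{\,2} f_2(y)f_{\epsilon_2}(v)$, which is integrable over $\R^2$. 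The uniform convergence of $g_k$ to $g$ together with continuity of $f_1$, $m$ and $f_{\epsilon_1}$ gives pointwise convergence of the integrand, and dominated convergence delivers $N(g_k)\to N(g)$.

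Next, observe that for every $g\in\G_0\subset\G$ the set $g([c,d])\cap[a,b]$ contains a nonempty open interval, so by the positivity of $f_1,f_2$ on the interiors of their supports (Assumption~A3), one has $D(g)>0$, exactly as noted in the proof of Theorem~\ref{thm:pw_conv}. Since $D$ is continuous on the compact set $\G_0$, it attains its infimum there, so $\inf_{g\in\G_0}D(g)=:D_0>0$. Consequently
\[
|L(g_1)-L(g_2)|
=\left|\frac{N(g_1)}{D(g_1)}-\frac{N(g_2)}{D(g_2)}\right|
\le \frac{|N(g_1)-N(g_2)|}{D_0}+\frac{\sup_{g\in\G_0}N(g)}{D_0^{\,2}}\,|D(g_1)-D(g_2)|,
\]
where $\sup_{g\in\G_0}N(g)$ is finite because $N$ is continuous on the compact $\G_0$. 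Thus continuity of $N$ and $D$ at every $g\in\G_0$ transfers to continuity of $L$, and the compactness of $\G_0$ upgrades this to uniform continuity.

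The only genuine obstacle is the passage to the limit inside the integrals defining $N$ and $D$, and this is handled cleanly by a dominated convergence argument using the bounds on the densities from Assumption~A3 and the continuity of $f_1$, $f_{\epsilon_1}$ and $m$ from Assumptions~A3 and~A1*; uniform convergence of $g_k$ to $g$ on $[c,d]$ is exactly what is needed to ensure pointwise convergence of the integrands (noting that outside $[c,d]$ the factor $f_2(y)$ vanishes, so issues about $g$ being defined only on $[c,d]$ do not arise).
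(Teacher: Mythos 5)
Your proof is correct and follows essentially the same route as the paper's: write $L=N/D$, establish continuity of $N$ and $D$ on $\G_0$ by dominated convergence using the bounds and continuity in Assumptions~A1* and~A3, note $D>0$, and upgrade to uniform continuity via compactness of $\G_0$. Your explicit uniform lower bound $D_0$ on $D$ and the quotient inequality are a slightly more careful spelling-out of the final step that the paper leaves implicit, but not a different argument.
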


        \begin{proof}
            Let us recall the representation~\eqref{eq:Ln_L} of $L(g)$ and the expressions of $N(g)$ and $D(g)$ in~\eqref{eq:N}~and~\eqref{eq:D} respectively. We first show that $N(g)$ is continuous on $\G_0$. Let $g\in\G_0$ and $\{g_k\}$ be a sequence of functions in $\G_0$ such that $\lim_{k\rightarrow\infty}\|g_k-g\|=0$. By Assumption~A3, it follows that the integrand in $N(g_k)$ is bounded by an integrable function $M_f^2f_2(y)f_{\epsilon_2}(v)$ where $M_f$ is the common upper bound of $f_1$, $f_2$, $f_{\epsilon_1}$ and $f_{\epsilon_2}$. Further, by the continuity of $f_1$ and $f_{\epsilon_1}$ (Assumption~A3) and $m$ (Assumption~A1*), and the fact that $g_k$ converges to $g$ pointwise, it follows that the integrand tends to $f_1(g(y))f_2(y)f_{\epsilon_1}(v-m(g(y))+m(g_0(y)))f_{\epsilon_2}(v)$ as $k\rightarrow\infty$. Therefore, by dominating convergence theorem $\lim_{k\rightarrow\infty}N(g_k)=N(g)$ i.e.\ $N$ is continuous on $\G_0$.

             Similar arguments show that $D(g)$ is continuous on $\G_0$ as well. Further, by Assumption~A3 and the condition that $g[c,d]\cap[a,b]$ we have that $D(g)>0$ for all $g\in\G_0$. This gives that $L(g)$ is continuous on $G_0$. The lemma is, therefore, followed from the compactness of $G_0$.
        \end{proof}

            {\sc Proof of Theorem~\ref{thm:Lngn_hat_conv}.}
            Fix $\epsilon > 0$. We have
            \begin{eqnarray}
                &&\hskip -30pt pr(|L_n(\hat{g}_n)-L(g_0)|>\epsilon)\notag\\
                &\leq& pr(|L_n(\hat{g}_n)-L(g_0)|>\epsilon,|L_n(g_0)-L(g_0)|\leq\epsilon,|L_n(\hat{g}_n)-L(\hat{g}_n)|\leq\epsilon)\notag\\
                &&\qquad+pr(|L_n(\hat{g}_n)-L(\hat{g}_n)|>\epsilon)
                +pr(|L_n(g_0)-L(g_0)|>\epsilon),\label{eq:thm:Lngn_hat_conv:1}
            \end{eqnarray}
            where $\hat{g}_n$ is as in~\eqref{eq:gn_hat}. We show that all the three terms on the right side of (\ref{eq:thm:Lngn_hat_conv:1}) are arbitrarily small.

            From~\eqref{eq:gn_hat}, we have $L_n(g_0)\le L_n(\gn)$. Thus, if $|L_n(g_0)-L(g_0)|\leq\epsilon$, $L_n(\hat{g}_n)\ge L(g_0)-\epsilon$.
            By Theorem~\ref{thm:uniqueness}(a), if $|L_n(\hat{g}_n)-L(\hat{g}_n)|\leq\epsilon$ then $L_n(\hat{g}_n)\le L(g_0)+\epsilon$.
            Therefore, if $|L_n(g_0)-L(g_0)|\leq\epsilon$ and $|L_n(\hat{g}_n)-L(\hat{g}_n)|\leq\epsilon$ then $|L_n(\hat{g}_n)-L(g_0)|\leq\epsilon$, which makes the first term on the right side of~\eqref{eq:thm:Lngn_hat_conv:1} zero. Observing that $|L_n(\gn)-L(\gn)|\leq\sup_{g \in\G_0}|L_n(g)-L(g)|$, by Theorem~\ref{thm:unif_conv} the second term tends to zero while by Theorem~\ref{thm:pw_conv} the last term goes to zero as $n\rightarrow\infty$.

\bigskip
            {\sc Proof of Theorem~\ref{thm:consistency}.}
            Suppose that $\hat{g}_n$ does not tend to $g_0$ in probability as $n\rightarrow\infty$, i.e.\ there exists an $\epsilon>0$ and a $\delta
            >0$ such that $P\{\|\hat{g}_n-g_0\|\geq\epsilon\}>\delta$ infinitely often. Consider a closed subset $\mathcal{N}_\epsilon^c(g_0)=\{g:\|g-g_0\|\ge\epsilon,\;g\in\G_0\}$. By Lemma~\ref{lem:L_unif_cont} and the compactness of~$\G_0$, there exists a
            $\tilde{g}\in \mathcal{N}_\epsilon^c(g_0)$ such that $\tilde{g}=\argmax_{g\in\mathcal{N}_\epsilon^c(g_0)}L(g)$. Denote $\eta=L(g_0)-L(\tilde{g})$. By Theorem~\ref{thm:uniqueness}(b), $\eta>0$. We have $|L_n(\hat{g}_n)-L(g_0)|\ge |L(g_0)-L(\hat{g}_n)|-|L_n(\hat{g}_n)-L(\hat{g}_n)|$. Thus, if $\hat{g}_n\in\mathcal{N}_\epsilon^c(g_0)$ and $\sup_{g\in\G_0}|L_n(g)-L(g)|<\eta/2$ then $|L_n(\hat{g}_n)-L(g_0)|>\eta/2$. Consequently,
            \begin{eqnarray}
                &&pr\left\{|L_n(\hat{g}_n)-L(g_0)|>\frac{\eta}{2}\right\}\notag\\
                &\geq&pr\{\|\hat{g}_n-g_0\|\geq\epsilon\}+pr\bigg\{\sup_{g\in\G_0}|L_n(g)-L(g)|<\frac{\eta}{2}\bigg\}-1\notag.
            \end{eqnarray}
            The first term on the right hand side is greater than $\delta$ infinitely often while by Theorem~\ref{thm:unif_conv}, the second term is greater than $1-\frac{\delta}{2}$ for all but finitely many $n$. Therefore, $pr(|L_n(\hat{g}_n)-L(g_0)|>\eta/2)>\delta/2\ \mbox{infinitely often}$, which contradicts Theorem~\ref{thm:Lngn_hat_conv}.
    \bibliographystyle{rss}
    \bibliography{mybib}

\begin{thebibliography}{25}
\expandafter\ifx\csname natexlab\endcsname\relax\def\natexlab#1{#1}\fi
\expandafter\ifx\csname url\endcsname\relax
  \def\url#1{\texttt{#1}}\fi
\expandafter\ifx\csname urlprefix\endcsname\relax\def\urlprefix{URL: }\fi

\bibitem[{Bhaumik and Sengupta(2017)}]{Bhaumik_Sengupta_2017}
Bhaumik, D. and Sengupta, D. (2017) Estimating historic movement of a
  climatological variable from a pair of misaligned data sets.
\newblock \textit{Technical report}, Indian Statistical Institute, Kolkata,
  India.
\newblock ASU/2017/17.

\bibitem[{Bigot(2006)}]{Bigot_2006}
Bigot, J. (2006) Landmark-based registration of curves via the continuous
  wavelet transform.
\newblock \textit{J. Comput. Graph. Statist.}, \textbf{15}, 542--564.

\bibitem[{Bookstein(1991)}]{Bookstein_1991}
Bookstein, F.~L. (1991) \textit{Morphometric Tools for Landmark Data: Geometry
  and Biology}.
\newblock Cambridge, England: Cambridge University Press.

\bibitem[{Brumback and Lindstrom(2004)}]{Brumback_Lindstrom_2004}
Brumback, L.~C. and Lindstrom, M.~J. (2004) Self modeling with flexible, random
  time transformations.
\newblock \textit{Biometrics}, \textbf{60}, 461--470.

\bibitem[{Eubank(1999)}]{Eubank_1999}
Eubank, R.~L. (1999) \textit{Nonparametric Regression and Spline Smoothing}.
\newblock New York: Marcel Dekker, Inc., 2 edn.

\bibitem[{Gervini and Gasser(2004)}]{Gervini_Gasser_2004}
Gervini, D. and Gasser, T. (2004) Self-modelling warping functions.
\newblock \textit{J. R. Stat. Soc. Ser. B. Stat. Methodol.}, \textbf{66},
  959--971.

\bibitem[{Gervini and Gasser(2005)}]{Gervini_Gasser_2005}
--- (2005) Nonparametric maximum likelihood estimation of the structural mean
  of a sample of curves.
\newblock \textit{Biometrika}, \textbf{92}, 801--820.

\bibitem[{James(2007)}]{James_2007}
James, G.~M. (2007) Curve alignment by moments.
\newblock \textit{Ann. Appl. Stat.}, \textbf{1}, 480--501.

\bibitem[{Kneip and Engel(1995)}]{Kneip_Engel_1995}
Kneip, A. and Engel, J. (1995) Model estimation in nonlinear regression under
  shape invariance.
\newblock \textit{Ann. Statist.}, \textbf{23}, 551--570.

\bibitem[{Kneip and Gasser(1988)}]{Kneip_Gasser_1988}
Kneip, A. and Gasser, T. (1988) Convergence and consistency results for
  self-modeling nonlinear regression.
\newblock \textit{Ann. Statist.}, \textbf{16}, 82--112.

\bibitem[{Kneip and Gasser(1992)}]{Kneip_Gasser_1992}
--- (1992) Statistical tools to analyze data representing a sample of curves.
\newblock \textit{Ann. Statist.}, \textbf{20}, 1266--1305.

\bibitem[{Kneip et~al.(2000)Kneip, Li, MacGibbon and Ramsay}]{Kneip_et_al_2000}
Kneip, A., Li, X., MacGibbon, K.~B. and Ramsay, J.~O. (2000) Curve registration
  by local regression.
\newblock \textit{Canad. J. Statist.}, \textbf{28}, 19--29.

\bibitem[{Kneip and Ramsay(2008)}]{Kneip_Ramsay_2008}
Kneip, A. and Ramsay, J.~O. (2008) Combining registration and fitting for
  functional models.
\newblock \textit{J. Amer. Statist. Assoc.}, \textbf{103}, 1155--1165.

\bibitem[{Lawton et~al.(1972)Lawton, Sylvestre and
  Maggio}]{Lawton_Sylvestre_Maggio_1972}
Lawton, W.~H., Sylvestre, E.~A. and Maggio, M.~S. (1972) Self modeling
  nonlinear regression.
\newblock \textit{Technometrics}, \textbf{14}, 513--532.

\bibitem[{Liu and M{\"{u}}ller(2004)}]{Liu_Muller_2004}
Liu, X. and M{\"{u}}ller, H.-G. (2004) Functional convex averaging and
  synchronization for time-warped random curves.
\newblock \textit{J. Amer. Statist. Assoc.}, \textbf{99}, 687--699.

\bibitem[{Loulergue et~al.(2008)Loulergue, Schilt, Spahni, Masson-Delmotte,
  Blunier, Lemieux, Barnola, Raynaud, Stocker,  and
  Chappellaz}]{Loulergue_et_al_2008}
Loulergue, L., Schilt, A., Spahni, R., Masson-Delmotte, V., Blunier, T.,
  Lemieux, B., Barnola, J.-M., Raynaud, D., Stocker, T.,  and Chappellaz, J.
  (2008) Orbital and millennial-scale features of atmospheric ch4 over the past
  800,000 years.
\newblock \textit{Nature}, \textbf{453}, 383--386.
\newblock
  \urlprefix\url{ftp://ftp.ncdc.noaa.gov/pub/data/paleo/icecore/antarctica/epica\_domec/edc-ch4-2008.txt}.

\bibitem[{L\"{u}thi et~al.(2008)L\"{u}thi, Floch, Bereiter, Blunier, Barnola,
  Siegenthaler, Raynaud, Jouzel, Fischer, Kawamura,  and
  Stocker}]{Luthi_et_al_2008}
L\"{u}thi, D., Floch, M.~L., Bereiter, B., Blunier, T., Barnola, J.~M.,
  Siegenthaler, U., Raynaud, D., Jouzel, J., Fischer, H., Kawamura, K.,  and
  Stocker, T. (2008) High-resolution carbon dioxide concentration record
  650,000-800,000 years before present.
\newblock \textit{Nature}, \textbf{453}, 379--382.
\newblock
  \urlprefix\url{ftp://ftp.ncdc.noaa.gov/pub/data/paleo/icecore/antarctica/epica\_domec/edc-co2-2008.txt}.

\bibitem[{Masson-Delmotte(2007)}]{Valerie_2007}
Masson-Delmotte, V. (2007) Lsce/ipsl igbp pages/wdca contribution.
\newblock
  \urlprefix\url{ftp://ftp.ncdc.noaa.gov/pub/data/paleo/icecore/antarctica/epica\_domec/edc3deuttemp2007.txt}.

\bibitem[{Petit et~al.(1999)Petit, Jouzel, Raynaud, Barkov, Barnola, Basile,
  Bender, Chappellaz, Davis, Delaygue, Delmotte, Kotlyakov, Legrand, Lipenkov,
  Lorius, Pepin, Ritz, Saltzman and Stievenard}]{Petit_et_al_1999}
Petit, J.~R., Jouzel, J., Raynaud, D., Barkov, N.~I., Barnola, J.~M., Basile,
  I., Bender, M., Chappellaz, J., Davis, M., Delaygue, G., Delmotte, M.,
  Kotlyakov, V.~M., Legrand, M., Lipenkov, V.~Y., Lorius, C., Pepin, L., Ritz,
  C., Saltzman, E. and Stievenard, M. (1999) Climate and atmospheric history of
  the past 420,000 years from the vostok ice core, {Antarctica}.
\newblock \textit{Nature}, \textbf{399}, 429--436.
\newblock
  \urlprefix\url{ftp://ftp.ncdc.noaa.gov/pub/data/paleo/icecore/antarctica/vostok/co2nat.txt,
  ftp://ftp.ncdc.noaa.gov/pub/data/paleo/icecore/antarctica/vostok/ch4nat.txt,
  ftp://ftp.ncdc.noaa.gov/pub/data/paleo/icecore/antarctica/vostok/deutnat.txt}.

\bibitem[{Ramsay and Li(1998)}]{Ramsay_Li_1998}
Ramsay, J.~O. and Li, X. (1998) Curve registration.
\newblock \textit{J. R. Stat. Soc. Ser. B. Stat. Methodol.}, \textbf{60},
  351--363.

\bibitem[{St\"{u}tzle et~al.(1980)St\"{u}tzle, Gasser, Molinari, Largo, Prader
  and Huber}]{Stutzle_et_al_1980}
St\"{u}tzle, W., Gasser, T., Molinari, L., Largo, R., Prader, A. and Huber, P.
  (1980) Shape-invariant modelling of human growth.
\newblock \textit{annhumbio}, \textbf{7}, 507--528.

\bibitem[{Tang and M\"{u}ller(2008)}]{Tang_Muller_2008}
Tang, R. and M\"{u}ller, H.-G. (2008) Pairwise curve synchronization for
  functional data.
\newblock \textit{Biometrika}, \textbf{95}, 875--889.

\bibitem[{Wang and Gasser(1997)}]{Wang_Gasser_1997}
Wang, K. and Gasser, T. (1997) Alignment of curves by dynamic time warping.
\newblock \textit{Ann. Statist.}, \textbf{25}, 1251--1276.

\bibitem[{Wang and Gasser(1999)}]{Wang_Gasser_1999}
--- (1999) Synchronizing sample curves nonparametrically.
\newblock \textit{Ann. Statist.}, \textbf{27}, 439--460.

\bibitem[{Watts(1992)}]{Watts_1992}
Watts, A. (1992) \textit{International Law and the Antarctic Treaty System}.
\newblock Cambridge: Grotius Publications Ltd.

\end{thebibliography}
\end{document}